\newtheorem{observation}{Observation}
\newcommand{\OPT}{\operatorname{OPT}}
\newcommand{\ALG}{\operatorname{ALG}}
\newcommand{\GRD}{\operatorname{GRD}}
\newcommand{\inbal}[1]{{\color{red}{[ITC: #1]}}}
\newcommand{\ameer}[1]{{\color{blue}{[AA: #1]}}}
\begin{document}

\title{Auctions with Interdependence and SOS: Improved Approximation}

\titlerunning{Auctions with Interdependence and SOS}

\author{Ameer Amer \and
Inbal Talgam-Cohen}

\authorrunning{A.~Amer and I.~Talgam-Cohen}

\institute{Faculty of Computer Science, Technion -- Israel Institute of Technology,
\email{\{ameeramer,italgam\}@cs.technion.ac.il}}

\maketitle

\begin{abstract}
Interdependent values make basic auction design tasks -- in particular maximizing welfare truthfully in single-item auctions -- quite challenging. Eden et al.~recently established that if bidders’ valuation functions are submodular over their signals (a.k.a.~SOS), a truthful 4-approximation to the optimal welfare exists. We show existence of a mechanism that is truthful and achieves a tight 2-approximation to the optimal welfare when signals are binary. Our mechanism is randomized and assigns bidders only $0$ or $\frac{1}{2}$ probabilities of winning the item. Our results utilize properties of submodular set functions, and extend to matroid settings. 

\keywords{Mechanism design \and Welfare maximization \and Submodularity.}
\end{abstract}

\section{Introduction}

One of the greatest contributions of Robert Wilson and Paul Milgrom, the 2020 Nobel Laureates in economics, is their formulation of a framework for auction design with \emph{interdependent} values \cite{cite-scientific-background-for-the-noble-prize}. Up to their work, the standard assumption underlying auction design theory was that each bidder fully knows her value for the item being auctioned, because this value depends only on her own private information. This assumption is, however, far from reality in very important settings -- for example, when the auction is for drilling rights, the information one bidder has about whether or not there is oil to be found is extremely relevant to how another bidder evaluates the rights being auctioned. Works like \cite{Wilson77-A-bidding-model-of-perfect-competition} and \cite{MilgromW82} lay the foundation for rigorous mathematical research of such settings, yet many key questions still remain unanswered.

For concreteness, consider an auction with a single item for sale (our main setting of interest).
In the interdependent values model, every bidder $i\in[n]$ has a privately-known signal $s_i$, and her value $v_i$ is a (publicly-known) function of all the signals, i.e., $v_i=v_i(s_1,s_2,...,s_n)$. Thus, in this model, not only the auctioneer is in the dark regarding a bidder's willingness to pay for the item being auctioned; so is the bidder herself (who knows $s_i$ and $v_i(\cdot)$ but not $s_{-i}$)! 

This stark difference from the standard, independent private values (IPV) model creates a big gap in our ability to perform seemingly-simple auction design tasks. Arguably the most fundamental such task is \emph{truthful welfare maximization}. For IPV, the truthful welfare-maximizing Vickrey auction \cite{Vickrey61-counterspeculation-paper} is a pillar of mechanism design (e.g., it has many practical applications and is usually the first auction taught in a mechanism design course). But with interdependence, welfare and truthfulness are no longer perfectly compatible: Consider two bidders reporting their signals $s_1,s_2$ to the auction, which allocates the item to the highest-value bidder according to these reports; if the valuation functions $v_1,v_2$ are such that bidder 1 wins when $s_1=0$ but loses when $s_1=1$, this natural generalization of Vickrey to interdependence is non-monotone and thus non-truthful. This is the case, for example, if $v_1=1+s_1$ and $v_2=H\cdot s_1$ for $H>2$ (see \cite[Example 1.2]{EdenFFG19}). 

The classic economics literature addressed this challenge by introducing a somewhat stringent condition on the valuation functions called ``single-crossing'', which ensures truthfulness of the natural generalization of Vickrey (in particular, single-crossing is violated by $v_1=1+s_1,v_2=H\cdot s_1$). Recently, a breakthrough result of Eden et al.~\cite{EdenFFG19} took a different approach: For simplicity consider \emph{binary} signals -- e.g., ``oil'' or ``no oil'' in an auction for drilling rights. Formally, $s_i\in\{0,1\}$ (we focus on the binary case throughout the paper). The valuations are now simply \emph{set functions} over the signals, objects for which a rich mathematical theory exists. Eden et al.~applied a \emph{submodularity} assumption to these set functions (in particular, submodularity holds for $v_1=1+s_1$ and $v_2=H\cdot s_1$). 
Under such submodularity over the signals (\emph{SOS}), they shifted focus from maximizing welfare to \emph{approximating} the optimal welfare. While they showed that no truthful mechanism can achieve a better approximation factor than 2 (guaranteeing more than half the optimal welfare), they constructed a truthful randomized mechanism that achieves a 4-approximation (guaranteeing at least a quarter of the optimal welfare). The gap between $2$ and $4$ was left as an open problem.

\subsubsection{Our Results and Organization.}

In this work we resolve the above open problem of~\cite{EdenFFG19} for binary signals. 
More precisely, we show that in the binary signal case there exists a truthful randomized mechanism that achieves a 2-approximation to the optimal welfare (for a formal statement see Theorem~\ref{thm:main}). Our result holds for any number $n$ of bidders, and is constructive -- that is, we give an algorithm that gets the $n$ valuation functions as input, and returns the mechanism as output.%
\footnote{The algorithm runs in time polynomial in its input size, which consists of set functions over $n$ elements and so is exponential in $n$.}

The fact that our mechanism is randomized is unsurprising given another result of Eden et al.~\cite{EdenFFG19}, who show that a deterministic mechanism cannot achieve a constant approximation to the optimal welfare even with SOS. This result is in fact proved with the above example of $v_1=1+s_1, v_2=H\cdot s_1$ and $s_i\in\{0,1\}$. An interesting corollary of our construction is that a $2$-approximation is achievable by a mechanism that is only ``slightly'' randomized -- the  only allocation probabilities it uses are $0$ and~$\frac{1}{2}$.

Our algorithm is arguably quite simple and streamlined -- for every signal profile it searches for a feasible pair of bidders whose aggregate value exceeds that of the highest bidder, and randomly allocates the item among these two (this explains the factor of 2 in the approximation guarantee). Only if no such pair exists, the item is randomly either allocated to the highest bidder or left unallocated. To maintain monotonicity, the algorithm \emph{propagates} allocation probabilities to neighboring signal profiles. Despite its relative simplicity, the algorithm requires careful analysis, which in particular relies on new properties of collections of submodular functions (Section~\ref{sub:main-lemmas}). The main technical challenge is in showing that the 2-approximation guarantee holds despite the propagations.

{\it Example.}
To illustrate our method, consider again the above example of $v_1=1+s_1$ and $v_2=H\cdot s_1$ where $s_i\in\{0,1\}$. Our algorithm returns a randomized allocation rule that gives the item to bidder~$1$ with probability~$\frac{1}{2}$ if $s_1=0$, and randomly allocates it to one of the two bidders if $s_1=1$.%
\footnote{Our algorithm has two iterations: At $s_1=0$, an appropriate pair is not found and so the highest bidder (bidder $1$) wins the item with probability~$\frac{1}{2}$, which is propagated forward to this bidder at $s_1=1$. At $s_1=1$, an appropriate pair is again not found and so the highest bidder (bidder $2$) wins the item with probability~$\frac{1}{2}$.}
This allocation rule is monotone (unlike the natural generalization of Vickrey), and leads to a truthful mechanism with a 2-approximation guarantee.

{\it Extensions.} 
In Appendix~\ref{matroid} we extend our main result to beyond single-item settings, namely to general single-parameter settings in which the set of winning bidders must satisfy a \emph{matroid} constraint~\cite{Oxley's-book-on-matroids}. 
As in \cite{EdenFFG19}, we can also extend our positive results 
from welfare to \emph{revenue} maximization using a reduction of \cite{EdenFFG18}.

{\it Organization.} 
After presenting the preliminaries in Section~\ref{sec:preliminaries}, we state our main theorem and give an overview of our algorithm in Section~\ref{sec:main-result}. The analysis appears in Section~\ref{sec:analysis} and Appendix~\ref{appx:missing}. Section~\ref{sec:summary} summarizes with future directions. Appendix~\ref{appx:pseudocode} includes the algorithm and running time, Appendix~\ref{matroid} the extension to matroids and Appendix~\ref{over-binary} our results for non-binary signals.

\subsubsection{Additional Related Work.}

Interdependent values have been extensively studied in the economic literature (see, e.g., \cite{DasguptaM00,efficient-inter,Maskin92,Ausubel99}).
In computer science, most works to date focus on the objective of maximizing revenue \cite{ChawlaFK14,RoughgardenT16,Li17,EdenFFG18}.
The work of \cite{EdenFFG18} considers welfare maximization with a relaxed \emph{$c$-single-crossing} assumption, where parameter $c\geq1$ measures how close the valuations are to satisfying classic single-crossing. This work achieves a 
$c$-approximation for settings with binary signals. 
Their mechanisms also use propagations but otherwise are quite different than ours. 
{{The work of \cite{Vasilis2021} 
also focuses on welfare but does not assume single-crossing; instead it partitions the bidders into $\ell$ ``expertise groups'' based on how their signal can impact the values for the good, and using clock auctions achieves approximation results parameterized by $\ell$ (and by the number of possible signals).}} 
The main paper our work is inspired by is \cite{EdenFFG19}. It introduces the \emph{Random Sampling Vickrey auction}, which by excluding roughly half of the bidders achieves a $4$-approximation to the optimal welfare for single-parameter, downward-closed SOS environments.
The authors also show positive results for combinatorial SOS environments under various natural constraints. 
%
Finally, \cite{EdenFTZ21} also study welfare maximization in single- and multi-parameter environments but by simple, non-truthful parallel auctions.

\section{Setting}
\label{sec:preliminaries}

{\it Signals.} As our main setting of interest we consider single-item auctions with $n$ bidders. Every bidder $i$ has a binary \emph{signal} $s_i\in\{0,1\}$, which encompasses the bidder's private bit of information about the item.
It is convenient to identify a signal profile $s=(s_1,s_2,...,s_n)$ with its corresponding set $S=\{i \mid s_i=1\}$ (by treating $s_i$ as an indicator of whether $i\in S$).

{\it Values.} The bidders have \emph{interdependent values} for the item being auctioned: Every bidder $i$'s value $v_i$ is a non-negative function of all bidders' signals, i.e., $v_i=v_i(s_1,s_2,...,s_n)\ge 0$. We adopt the standard assumption that the valuation function $v_i$ is weakly increasing in each coordinate and strongly increasing in~$s_i$. Using the set notation we also write $v_i=v_i(S)$.%
\footnote{This notation is not to be confused with the value for a \emph{set of items} $S$; in our model there is a \emph{single} item, and a bidder's interdependent value for it is determined by the \emph{set of signals}, i.e., which subset of signals is ``on''.}
This makes $v_i(\cdot)$ a monotone \emph{set function} over subsets of~$[n]$.

{\it Who knows what.} A setting is summarized by the valuation functions $v_1,\dots,v_n$, which are publicly known (as is the signal domain $\{0,1\}$). The instantiation of the signals is private knowledge, that is, signal $s_i$ is known only to bidder~$i$.  

\subsubsection{SOS Valuations.}

The term \emph{SOS valuations} was coined by Eden et al.~\cite{EdenFFG19} to describe interdependent valuation functions that are \emph{submodular over the signals} (see also \cite{ChawlaFK14,EdenFFG18,NiazadehRW20}).%
\footnote{As mentioned above, submodularity over signals is not to be confused with submodularity over items in combinatorial auctions.}
With binary signals, valuations are SOS if $v_i(\cdot)$ is a submodular set function for every $i\in[n]$.


\begin{definition}[Submodular set function]
A set function $v_i:2^{[n]}\rightarrow \mathbb{R}$ is submodular if for every $S,T\subseteq [n]$ such that $S\subseteq T$ and $i\in [n]\setminus T$ it holds that $v_i(S\cup \{i\})-v_i(S)\geq v_i(T\cup \{i\})-v_i(T)$.
\end{definition}

A weaker definition that will also be useful for us is subadditivity. Every submodular set function is subadditive, but not vice versa.

\begin{definition}[Subadditive set function]
A set function $v_i:2^{[n]}\rightarrow \mathbb{R}$ is subadditive if for every $S,T\subseteq [n]$ is holds that $f(S)+f(T)\geq f(S\cup T)$.
\end{definition}

Given a set function $v_i$ and a subset $S\subseteq[n]$, we use $v_i(\cdot\mid S)$ to denote the following set function: $v_i(T\mid S)=v_i(T\cup S)-v_i(S)$ for every $T\subseteq[n]$. It is known that submodularity of $v_i$ implies subadditivity of $v_i(\cdot\mid S)$:

\begin{proposition}[e.g., Lemma 1 of \cite{LehmannLN06}]
\label{pro:subadd}
If $v_i$ is submodular then $v_i(\cdot\mid S)$ is subadditive for every subset $S\subseteq[n]$. 
\end{proposition}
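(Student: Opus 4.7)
The plan is to unroll the definition of $v_i(\cdot \mid S)$ and reduce the claim to the standard union--intersection form of submodularity. Subadditivity of $v_i(\cdot \mid S)$ requires that for every $A,B\subseteq[n]$,
\[
v_i(A \mid S) + v_i(B \mid S) \geq v_i(A \cup B \mid S),
\]
which after substituting $v_i(T\mid S)=v_i(T\cup S)-v_i(S)$ is equivalent to
\[
v_i(A \cup S) + v_i(B \cup S) - v_i(S) \geq v_i(A \cup B \cup S).
\]

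First I would note that the diminishing-marginals form of submodularity stated in the excerpt is equivalent to the symmetric inequality
\[
v_i(X) + v_i(Y) \geq v_i(X \cup Y) + v_i(X \cap Y) \quad \text{for all } X,Y\subseteq[n],
\]
via the standard element-by-element telescoping argument (adding the elements of $X \setminus Y$ one at a time to both $X \cap Y$ and to $Y$, and summing the resulting marginal inequalities). I would include a one-line pointer to this equivalence rather than grinding through the telescoping.

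Next I would instantiate the symmetric form with $X := A \cup S$ and $Y := B \cup S$, observing that $X \cup Y = A \cup B \cup S$ and $X \cap Y = (A \cap B) \cup S$. This yields
\[
v_i(A \cup S) + v_i(B \cup S) \geq v_i(A \cup B \cup S) + v_i\bigl((A \cap B) \cup S\bigr).
\]
Since $(A \cap B) \cup S \supseteq S$, the monotonicity of $v_i$ assumed in Section~\ref{sec:preliminaries} gives $v_i\bigl((A \cap B) \cup S\bigr) \geq v_i(S)$, and rearranging produces exactly the desired inequality.

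There is essentially no real obstacle; the statement is a textbook fact about submodular set functions. The only conceptual point worth flagging is the appeal to monotonicity to collapse $v_i((A\cap B)\cup S)$ down to $v_i(S)$ on the right-hand side, which is legitimate because the valuation functions in our model are assumed monotone in each coordinate.
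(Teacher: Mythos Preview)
Your proof is correct. The paper itself does not supply a proof of this proposition; it simply cites it as a known fact (Lemma~1 of \cite{LehmannLN06}), so there is nothing to compare against. Your argument via the union--intersection characterization of submodularity together with monotonicity is the standard one and goes through as written.
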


We refer to an ordering $S_1,S_2,\dots,S_{2^n}$ of all subsets of the ground set of elements~$[n]$ as \emph{inclusion-compatible} if for every pair $S_k,S_\ell$ such that $S_k\subset S_{\ell}$, it holds that $k<\ell$ (the included set is before the including one in the ordering).

{\it Notation.} Consider a set function $v_i$ and two elements $j,k\in[n]$. For brevity we often write $v_i(j)$ for $v_i(\{j\})$, and $v_i(jk)$ for $v_i(\{j,k\})$.




\subsection{Auctions with Interdependence}
\label{sub:auctions-interdep}

{\it Randomized mechanisms.} Due to strong impossibility results for deterministic mechanisms \cite{EdenFFG19}, we focus on randomized mechanisms as follows: 
A randomized mechanism $M=(x,p)$ for interdependent values is a pair of allocation rule $x$ and payment rule $p$. The mechanism solicits signal reports from the bidders, and maps a reported signal profile $s$ to non-negative allocations $x=(x_1,...,x_n)$ and expected payments $p=(p_1,...,p_n)$, such that the item is \emph{feasibly} allocated to bidder $i$ with probability $x_i$ (feasibility means $\sum_{i=1}^{n}{x_i(s)}\leq 1$).

{\it Truthfulness.} With interdependence, it is well-established that the appropriate notion of truthfulness is \emph{ex post IC} (incentive compatibility) and \emph{IR} (individual rationality). Mechanism $M$ is ex post IC-IR if the following holds for every bidder $i$, true signal profile~$s$ and reported signal~$s'_i$: Consider bidder $i$'s expected utility when the others truthfully report $s_{-i}$: 
$$
x_i(s_{-i},s'_i)v_i(s)-p_i(s_{-i},s'_i);
$$  
then this expected utility is non-negative and maximized by truthfully reporting $s'_i=s_i$.%
\footnote{Note the difference from \emph{dominant-strategy} IC, in which this guarantee should hold no matter how other bidders report.}

Similarly to independent private values, the literature on interdependent values provides a characterization of ex post IC-IR mechanisms -- as the class of mechanisms with a \emph{monotone} allocation rule $x$. Allocation rule $x$ satisfies monotonicity if for every signal profile $s$, bidder $i$ and $\delta\geq 0$, increasing $i$'s signal report by $\delta$ while holding other signals fixed increases $i$'s allocation probability:
$$
x_i(s_{-i},s_i)\leq x_i(s_{-i},s_i+\delta).
$$
The characterization also gives a payment formula which, coupled with the monotone allocation rule, results in an ex post IC-IR mechanism. In more detail, the expected payment of bidder $i$ is achieved by finding her critical signal report and plugging it into her valuation function while holding others' signals fixed (see \cite{RoughgardenT16} for a comprehensive derivation of the payments).

{\it Welfare maximization.} Our objective in this work is to design ex post IC-IR mechanisms for interdependent values that maximize social welfare. 
For a given setting and true signal profile $s$, the optimal welfare $\OPT(s)$ is achieved by giving the item to the bidder with the highest value, i.e., 
$\OPT(s)=\max_i \{v_i(s)\}$.
Given a randomized ex post IC-IR mechanism $M=(x,p)$ for this setting, 
$\ALG(s)$ is its welfare in expectation over the internal randomness, i.e., 
$\ALG(s)=\sum_{i=1}^{n}{x_i(s) v_i(s)}.$
We say mechanism $M$ achieves a $c$-approximation to the optimal welfare for a given setting if for every signal profile $s$, $\ALG(s)\geq \frac{1}{c} \OPT(s)$ (note that the required approximation guarantee here is ``universal'', i.e., should hold individually for every~$s$). Since Eden et al.~\cite{EdenFFG19} devise a setting for which no randomized ex post IC-IR mechanism can achieve better than a $2$-approximation, we aim to design mechanisms that achieve a $c$-approximation to the optimal welfare where $c\ge 2$ (the closer to $2$ the better).

\section{Main Result and Construction Overview}
\label{sec:main-result}

Our main result is the following:

\begin{theorem}
    \label{thm:main}
    For every single-item auction setting with $n$ bidders, binary signals and interdependent SOS valuations, there exists an ex post IC-IR mechanism that achieves a $2$-approximation to the optimal welfare.
\end{theorem}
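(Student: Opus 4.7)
The plan is to construct an explicit randomized allocation rule $x$ that (i) is monotone in each bidder's signal and (ii) satisfies $\sum_i x_i(S)\, v_i(S) \ge \frac{1}{2}\OPT(S)$ for every signal profile $S\subseteq[n]$. Given such an $x$, the standard characterization of ex post IC-IR for interdependent values described in Section~\ref{sub:auctions-interdep} supplies a matching payment rule $p$, so the remainder of the argument focuses on the allocation rule.

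Fix an inclusion-compatible ordering $S_1,S_2,\dots,S_{2^n}$ of signal profiles and define $x$ iteratively along this ordering. When the algorithm reaches profile $S$, the numbers $x_i(S')$ are already fixed for every earlier $S'$, in particular for $S'=S\setminus\{i\}$ with $i\in S$. Monotonicity of $x_i$ in $s_i$ forces $x_i(S)\ge x_i(S\setminus\{i\})$ for every such $i$; I treat these inequalities as committed ``mass'' $c_i(S) := x_i(S\setminus\{i\})$ propagated forward from earlier profiles. At $S$, the algorithm tries to find a feasible pair $\{i,j\}$ with $v_i(S)+v_j(S)\ge \OPT(S)$ such that the assignment $x_i(S)=x_j(S)=\frac{1}{2}$ respects all commitments $c_k(S)$ and feasibility $\sum_k x_k(S)\le 1$. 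If such a pair exists, both bidders are allocated probability $\frac{1}{2}$; otherwise the algorithm places $\frac{1}{2}$ on the highest-value bidder at $S$ (which then propagates forward) and leaves the remaining mass unallocated. Because at most two bidders receive positive probability at each profile, the rule uses only the values $0$ and $\frac{1}{2}$.

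Monotonicity holds by construction: every commitment $c_i(S)=x_i(S\setminus\{i\})$ is honored at $S$, so $x_i$ is non-decreasing when $s_i$ is flipped from $0$ to $1$ with $s_{-i}$ fixed, which is exactly the monotonicity condition in Section~\ref{sub:auctions-interdep}. The welfare bound is immediate in the ``good'' case: if a pair $\{i,j\}$ with $v_i(S)+v_j(S)\ge \OPT(S)$ is found, then
\[
\ALG(S)\ \ge\ \tfrac{1}{2}\bigl(v_i(S)+v_j(S)\bigr)\ \ge\ \tfrac{1}{2}\OPT(S).
\]
In the ``bad'' case, where no such pair exists, the strategy is to invoke subadditivity of the marginals (Proposition~\ref{pro:subadd}, a consequence of SOS) to show that the propagated commitments already cover $\frac{1}{2}\OPT(S)$: if no two bidders together reach the top bidder $i^\star$ at $S$, then the signal additions that produced $S$ from its predecessors can have contributed only little to $v_{i^\star}$, so the $\frac{1}{2}$-commitment inherited for $i^\star$ from an earlier profile still suffices.

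The main obstacle is making this inheritance argument watertight while simultaneously preserving total feasibility $\sum_i x_i(S)\le 1$ across all $2^n$ profiles. When the pair chosen at an ancestor $S'\subset S$ consists of bidders other than the current top bidder at $S$, propagated $\frac{1}{2}$-mass can sit on ``stale'' bidders whose value $v_i(S)$ is small, while the real value is concentrated on a bidder with no earlier commitment. To close this gap I would establish a structural lemma on collections of submodular set functions (the purpose of Section~\ref{sub:main-lemmas}) showing that whenever no augmenting pair exists at $S$, the valuations are ``flat'' enough for the inductive invariant at $S\setminus\{k\}$ to transfer to $S$ for a suitable $k$. Maintaining this invariant uniformly over the inclusion-compatible ordering -- so that the $2$-approximation survives every propagation step without ever over-committing probability -- is the crux of the proof.
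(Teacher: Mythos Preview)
Your outline matches the paper's high-level strategy, but what you have is a plan, not a proof: you explicitly defer the two steps that carry all the weight, and the informal argument you give for the ``bad'' case does not go through as stated.

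The concrete failure mode you have not ruled out is this. At a profile $S$, two bidders $i,j\in S$ may each carry a forward commitment $c_i(S)=c_j(S)=\tfrac12$ from $S\setminus\{i\}$ and $S\setminus\{j\}$, while the top bidder $k$ at $S$ is neither $i$ nor $j$ and $v_i(S)+v_j(S)<v_k(S)$. In that situation your ``good'' branch fails (the only pair compatible with the commitments is $\{i,j\}$, and it does not reach $\OPT(S)$), and your ``bad'' branch also fails (putting $\tfrac12$ on $k$ pushes total mass to $\tfrac32$). Your heuristic --- ``if no pair reaches $i^\star$ then the signals contributed little to $v_{i^\star}$, so an inherited commitment for $i^\star$ suffices'' --- does not apply here: $k$ has no inherited commitment at all.

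What the paper does to close this gap is exactly what your last paragraph gestures at but does not supply. First, the pair rule (\textbf{Priority~1}) is restricted to bidders with \emph{high} signals ($i,j\in S$), so the only way a $\tfrac12$-commitment can propagate forward is via the fallback rule (\textbf{Priority~2}); this guarantees that any forward-propagated bidder was the \emph{unique highest} at the profile where it was colored (Observation~\ref{obs:low-red-is-highest}). Second, two structural lemmas about SOS valuations (Lemmas~\ref{lemma1} and~\ref{lemma2}) show that under this ``highest at origin'' property, the scenario above is impossible: if $i$ and $j$ were each highest at $S\setminus\{i\}$ and $S\setminus\{j\}$, then necessarily $v_i(S)+v_j(S)\ge v_k(S)$. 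Third, a backward-propagation mechanism (black colors) is needed so that once two reds meet at some $S^*$, all other cells there --- including $k$ --- are blocked, and this blockage is pushed back to predecessors; the hard case is then showing that a \textbf{Priority~1} pair still exists when the top bidder is blocked, which is Lemma~\ref{lem:2-approx} and requires the full strength of Lemma~\ref{lemma2} (with the chains of ``bridge'' bidders $t_h,t'_h,t''_h$). None of these pieces is present in your proposal, and without the restriction $i,j\in S$ on the pair rule you cannot even get the ``highest at origin'' invariant off the ground.
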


Our proof of Theorem~\ref{thm:main} is constructive -- we design an algorithm (Algorithm~\ref{alg:main}) that gets as input the valuation functions $v_1,\dots,v_n$, and outputs an allocation rule $x$.
Note that the main goal of the algorithm is to establish existence.
Rule $x$ is guaranteed to be both feasible and monotone. Thus, coupled with the appropriate expected payments $p$ (based on critical signal reports), it constitutes an ex post IC-IR mechanism $M=(x,p)$.
The main technical challenge is in showing that mechanism $M$ has the following welfare guarantee: for every signal profile $s$, $\ALG(s)\ge \frac{1}{2}\OPT(s)$. We prove this approximation ratio and establish $x$'s other properties like monotonicity in Section~\ref{sec:analysis}. The algorithm itself appears in Appendix~\ref{appx:pseudocode}; we now give an overview of its construction.


\subsection{Construction Overview}
\label{sub:construct-overview}

In this section we give an overview of Algorithm~\ref{alg:main}.
The algorithm maintains an ``allocation table'' with rows corresponding to the $n$ bidders, and columns corresponding to subsets $S\subseteq [n]$. 
At termination, column $S$ will represent the allocation rule $x(S)$, with entry $(i,S)$ encoding $x_i(S)$. For clarity of presentation the encoding is via colors: 
At initiation, all entries of the table are colored {white} to indicate they have not yet been processed.
During its run, the algorithm colors each entry $(i,S)$ of the table either {red} or {black}. Once a cell has been colored red or black, its color remains invariant until termination. The colors represent allocation probabilities as follows:
\begin{itemize}
    \item {red} = bidder $i$ gets the item with probability $\frac{1}{2}$ at signal profile $S$;
    \item {black} = $i$ does not get the item at $S$ (i.e., gets the item with probability~$0$).
\end{itemize}
As an interesting consequence, the allocation rule achieving the 2-approximation guarantee of Theorem~\ref{thm:main} uses only two allocation probabilities, namely $\frac{1}{2}$ and $0$. Note that for feasibility, no more than two entries in a single column should be colored {red}, and the remaining entries should be colored {black}.

We now explain roughly how the colors are determined by the algorithm.
Consider an inclusion-compatible ordering of all subsets of $[n]$. 
The algorithm iterates over the ordered subsets, with $S$ representing the current subset. We say bidder $i$ \emph{can be colored {red} at iteration $S$} if at the beginning of the iteration, $(i,S)$ is colored either {white} or {red}.
We define a notion of \emph{favored bidder(s) at iteration $S$} -- these are the ones the algorithm ``favors'' as winners of the item given signal profile $S$, and so will color them {red} at $S$. First,
if there is a pair of bidders $i\ne j$ for which the following conditions all hold, we say they are favored at iteration $S$ with \textbf{Priority 1}: 
\begin{enumerate}
    \item Bidder $i$ and $j$'s signals both belong to $S$ (i.e., $s_i=s_j=1$); 
    \item Bidders $i$ and $j$ can both be colored {red} at iteration $S$;
    \item No other bidder $k\neq i,j$ is colored {red} at the beginning of iteration $S$;
    \item The sum of values $v_i(S)+v_j(S)$ is at least $\OPT(S)$ (recall that $\OPT(S)$ is the highest value of any bidder for the item given signal profile $S$).
\end{enumerate}
If such a pair does not exist, but there exists a bidder $i$ who satisfies the following alternative conditions, we say $i$ is favored at iteration $S$ with \textbf{Priority 2}:
\begin{enumerate}
    \item Bidder $i$ can be colored {red} at iteration $S$;
    \item The value $v_i(S)$ equals $\OPT(S)$.
\end{enumerate}

Our main technical result in the analysis of the algorithm is to show that, unless at the beginning of iteration~$S$ two bidders are already colored {red}, then 
one of the two cases above must hold. That is, in every iteration~$S$ with no two reds, there is always either a favored pair with \textbf{Priority 1}, or a single favored bidder with \textbf{Priority 2}. 
%
Assuming this holds, the algorithm proceeds as follows. At iteration $S$ it checks whether two bidders are already {red}, and if so continues to the next iteration. Otherwise, it colors the favored bidder(s) {red} by priority, and all other bidders {black}. The algorithm then performs \emph{propagation} to other subsets $S'$ in order to maintain monotonicity of the allocation rule (the term propagation was introduced in our context by \cite{EdenFFG19}): 
\begin{itemize}
    \item If bidder $i\notin S$ is colored {red} at subset $S$, then {red} is propagated \emph{forward} to bidder~$i$ at subset $S'=S\cup\{i\}$.
    \item If bidder $i\in S$ is colored {black} at subset $S$, then {black} is propagated \emph{backward} to bidder~$i$ at subset $S'=S\setminus\{i\}$.
\end{itemize}
This completes the overview of our construction.

\section{Proof of Theorem~\ref{thm:main}}
\label{sec:analysis}

We begin with a simple but useful observation:

\begin{observation}
\label{obs:low-red-is-highest}
Consider a subset $S\subseteq[n]$ and $i\notin S$. If during iteration $S$ bidder $i$ is colored red then $v_i(S)=\OPT(S)$. 
\end{observation}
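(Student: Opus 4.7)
The plan is a small case analysis over the ways in which the entry $(i,S)$ can be colored red, and to show that when $i\notin S$ all but one of these are excluded, leaving only the case whose defining condition directly yields $v_i(S)=\OPT(S)$.

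First I would enumerate the mechanisms described in Section~\ref{sub:construct-overview} by which an entry can acquire the color red. There are essentially two: (a) the entry is colored red at its own iteration (iteration $S$ for entry $(i,S)$) because bidder $i$ is favored under Priority 1 or Priority 2 at that iteration; (b) the entry is colored red as a result of \emph{forward propagation} from some earlier iteration (backward propagation only transmits black and so is irrelevant here).

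Next I would rule out mechanism (b) for entries with $i\notin S$. Forward propagation is defined to send red from a subset $S_0$ with $i\notin S_0$ to the subset $S_0\cup\{i\}$. Any entry produced by forward propagation therefore sits in a column that \emph{contains} the propagated bidder. Thus, if $i\notin S$, no forward propagation can have colored $(i,S)$ red, and the color must have been set in iteration $S$ itself via the favored-bidder rule.

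Within mechanism (a), Priority 1 is inapplicable: by definition it requires both members of the favored pair to have their signals in $S$, i.e., to be elements of $S$, which contradicts $i\notin S$. Hence bidder $i$ must have been colored red as the unique favored bidder under Priority 2 at iteration $S$. But Priority 2 explicitly demands $v_i(S)=\OPT(S)$, which is exactly the conclusion of the observation.

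There is no real obstacle here; the only subtlety is to be precise about what ``colored red during iteration $S$'' means and to verify that the propagation rules stated in Section~\ref{sub:construct-overview} cannot produce a red entry in a column that excludes the colored bidder. Once that bookkeeping is in place the observation follows immediately from the definition of Priority 2.
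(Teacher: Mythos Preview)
Your proposal is correct and follows essentially the same approach as the paper's own proof, which is the one-line argument that a low-signal bidder can only be colored red via \textbf{Priority~2}; you have simply expanded the implicit reasoning by explicitly ruling out forward propagation and \textbf{Priority~1}. The extra case (b) you handle is arguably already excluded by the phrase ``during iteration $S$'', but your treatment of it is sound and does no harm.
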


\begin{proof}
The algorithm colors a bidder with a low signal red only if this bidder has {\bf Priority 2}, and in this case her value must be highest among all bidders. 
\end{proof}


\noindent We now prove our main theorem, up to three lemmas which we prove in Sections~\ref{sub:no-errors}-\ref{sub:2-approx}, respectively. Section~\ref{sub:main-lemmas} also develops a necessary tool for the proof in Section~\ref{sub:2-approx}.

\begin{proof}[Theorem~\ref{thm:main}]
We show that Algorithm~\ref{alg:main} returns an allocation rule that is feasible, monotone, and achieves a $2$-approximation to the optimal welfare. For such an allocation rule there exist payments that result in an ex post IC-IR mechanism (see Section~\ref{sub:auctions-interdep}), establishing the theorem.

Let $x$ be the allocation rule returned by Algorithm~\ref{alg:main}.
We first show $x$ is \emph{feasible}. That is, for every subset $S\subseteq[b]$, the algorithm colors $(i,S)$ either red or black for every bidder $i$, and at most two bidders are colored red in column $S$.
To show this we invoke Lemma~\ref{lem:no-errors} below, by which Algorithm~\ref{alg:main} never reaches one of its error lines. 
Given that there are no errors, observe that the algorithm goes over all subsets, and for every subset $S\subseteq[n]$ either (i)~skips to the next subset (if two bidders are already red), or (ii)~finds a {\bf Priority~1} pair or {\bf Priority~2} bidder and colors them red. Indeed, by Lemma~\ref{lem:2-approx}, if (i)~does not occur then (ii)~is necessarily successful. Once a {\bf Priority~1} pair or {\bf Priority~2} bidder is found, the rest of the column is colored black. Furthermore, once any two bidders in a column are colored red, the rest of the column is colored black. This establishes feasibility.

We now show $x$ is \emph{monotone}. 
Since the only allocation probabilities $x$ assigns are $\frac{1}{2}$ and $0$ (and one of these is always assigned), it is sufficient to show that for every $S\subseteq[n]$ and $i\notin S$, if $x(i,S)=\frac{1}{2}$ then $x(i,S\cup\{i\})=\frac{1}{2}$.
This holds since every time the algorithm calls \textsc{ColorRed} to color $(i,S)$, it propagates the color red forward to $(i,S\cup\{i\})$ as well. 

It remains to show that $x$ achieves a $2$-approximation to the optimal welfare. By definition of {\bf Priority~1} and {\bf Priority~2}, if such bidders are colored red then a $2$-approximation is achieved for the corresponding signal profiles. It remains to consider signal subsets $S$ for which at the beginning of iteration $S$, two cells $i,j$ in the column are already colored red. These reds propagated forward from $v_i(S\setminus\{i\})$ and $v_j(S\setminus\{j\})$. 
Let $v_k(S)$ be the highest value at $S$. 
By Observation~\ref{obs:low-red-is-highest}, $v_i(S\setminus\{i\})$ and $v_j(S\setminus\{j\})$ are highest at $S\setminus\{i\}$ and $S\setminus\{j\}$, respectively: 
    \begin{align*}
    v_i(S\setminus\{i\}) & \geq v_k(S\setminus\{i\});\\
    v_j(S\setminus\{j\}) & \geq v_k(S\setminus\{j\}).
    \end{align*}
Applying Lemma~\ref{lemma1} to the above inequalities, it cannot simultaneously hold that $v_k(S)>v_i(S)+v_j(S)$.
So $v_i(S)+v_j(S)\ge v_k(S)$, and the approximation guarantee holds, completing the proof. 
\qed
\end{proof}

\subsection{No errors}
\label{sub:no-errors}

In this section we show that the algorithm runs without producing an error.

\begin{lemma}[No errors]
\label{lem:no-errors}
    Algorithm~\ref{alg:main} never reaches error lines 35, 41 or~55. 
\end{lemma}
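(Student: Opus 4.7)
The plan is to prove the lemma by strong induction on the position of a subset $S$ in the fixed inclusion-compatible ordering $S_1,S_2,\dots,S_{2^n}$, maintaining the invariant that at the start of iteration $S_t$: (a) every column $S_{t'}$ with $t'<t$ is completely colored red/black with at most two reds; and (b) the partial red-colorings placed on cells of columns $S_{t''}$ with $t''\ge t$ (by forward propagation from earlier iterations) are consistent, i.e., no cell carries two conflicting colors. Under this invariant I would then argue separately about each of the three error lines.

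First I would dispose of line~55, which is the conceptual error: the algorithm reaches it when iteration $S_t$ has strictly fewer than two red cells at its start but no favored pair or favored bidder can be found. The invariant says at most one cell in column $S_t$ is already red at the start of the iteration. This is exactly the setting in which Lemma~\ref{lem:2-approx} applies, and it guarantees that either a \textbf{Priority~1} pair or a \textbf{Priority~2} bidder exists. Hence the search must succeed and line~55 is not reached.

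Next I would analyze lines~35 and~41, which I expect to guard against color conflicts during the coloring or propagation routines. The key observations are structural: forward red-propagation only writes to $(i,S\cup\{i\})$ with $S\cup\{i\}$ strictly later in the inclusion-compatible ordering than $S$, so by the invariant the target cell is either white or has already been forced red along the same chain; there is no black to clash with. For black, note that whenever the algorithm attempts to color $(i,S)$ black with $i\in S$, the cell $(i,S\setminus\{i\})$ has already been processed; had it been red, forward propagation would have set $(i,S)$ red before the current iteration, contradicting the attempt to paint it black now. A symmetric argument rules out the case where the algorithm tries to color a white/black cell red and finds it already black. Using this I would show that the guards at lines~35 and~41 are never triggered, and the invariant is preserved into iteration $S_{t+1}$.

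The main obstacle I expect is making the propagation case analysis airtight, because a single cell $(i,S)$ can in principle acquire colors from three sources: its own iteration, forward-red from $(i,S\setminus\{i\})$, and backward-black from $(i,S\cup\{i\})$ (which writes into an already-processed column). The crucial point to verify is that forward-red and backward-black are structurally dual — if one would overwrite the other with an opposing color, then tracing the chain backward shows the invariant must already have been violated at some earlier iteration, contradicting the induction hypothesis. Once this duality is pinned down, lines~35 and~41 follow, and together with Lemma~\ref{lem:2-approx} for line~55 the lemma is established.
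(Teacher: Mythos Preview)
Your proposal contains a genuine gap stemming from a misreading of the algorithm. Line~55 is \emph{not} a guard that fires when the search for a favored pair/bidder fails; it is the error inside \textsc{ColorBlack} that triggers when the procedure is called on a cell already colored red (symmetrically to line~35 inside \textsc{ColorRed}). There is in fact no error line in \textsc{Allocate} for the case that neither priority is found; the paper handles the ``a priority is always found'' claim separately, in the proof of Theorem~\ref{thm:main}, via Lemma~\ref{lem:2-approx}. Consequently your invocation of Lemma~\ref{lem:2-approx} to discharge line~55 is aimed at the wrong target, and you give no argument at all for the actual content of line~55, namely that \textsc{ColorBlack} is never called on an already-red cell. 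The paper proves this by case analysis on the three call sites of \textsc{ColorBlack} (line~27 in \textsc{Allocate}, line~46 inside \textsc{ColorRed}, and the recursive backward-propagation call at line~62), showing in each case that the target cell cannot be red; Lemma~\ref{lem:2-approx} plays no role.

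Your treatment of lines~35 and~41 is closer in spirit to the paper's, but is still too coarse. For line~35 the paper distinguishes the two call sites of \textsc{ColorRed} (from \textsc{Allocate}, where non-blackness is explicitly checked beforehand, and from forward propagation, where one argues that if $(b,S\cup\{b\})$ were black then backward propagation would already have made $(b,S)$ black). For line~41 the paper argues that a third-red attempt can only arise via forward propagation to a column that already has two reds, and shows this is impossible because once two cells of a column are red all remaining cells are immediately blackened (lines~44--48), so the forward-propagating cell would already be black and would have tripped line~35 first. Your sketch gestures at the forward/backward duality but does not isolate these call-site cases, and in particular does not address line~41 at all beyond grouping it with line~35.
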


\begin{proof}
We first establish that \textsc{ColorRed} is only called on white or red cells (so line 35 is never reached).
In Algorithm~\ref{alg:main}'s main procedure \textsc{Allocate}, sub-procedure \textsc{ColorRed} is called on cells that correspond to bidder $b$ of {\bf Priority~1} or {\bf Priority~2} (see lines 16-17 and 25), i.e., it is called only after verifying $b$ is not colored {black} (lines 13 and 23). Sub-procedure \textsc{ColorRed} also calls itself (line 50), after being called on $S$ from \textsc{Allocate}; in this case the color red is being propagated forward from $S$ to $S\cup\{b\}$. Assume for contradiction $(b,S\cup\{b\})$ is already colored black. This means \textsc{ColorBlack} was called on $(b,S\cup\{b\})$. But then due to backward propagation, $(b,S)$ would already be black too, in contradiction to \textsc{ColorRed} being called on it from \textsc{Allocate}.

We now show \textsc{ColorBlack} is only called on white or {black} cells (so line 55 is never reached).
Sub-procedure \textsc{ColorBlack} is called from \textsc{Allocate} (line 27) inside a condition verifying the current color is white.
\textsc{ColorBlack} is also called from \textsc{ColorRed} at (line 46); in this case the current subset $S$ already has two red bidders, on which \textsc{ColorBlack} is \emph{not} called, and there can never be more than two reds due to the condition at line 41 of \textsc{ColorRed}. 
Finally, \textsc{ColorBlack} also calls itself (line 62), after being called on $S$ from \textsc{Allocate} or from \textsc{ColorRed}; in this case the color black is being propagated backward from $S\cup\{b\}$ to $S$. Assume for contradiction $(b,S)$ is already colored red. This means \textsc{ColorRed} was called on $(b,S)$. But then due to forward propagation, $(b,S\cup\{b\})$ would already be red too, in contradiction to \textsc{ColorBlack} being called on it from \textsc{Allocate} or from \textsc{ColorRed}.

It remains to show that Algorithm~\ref{alg:main} never reaches error line 41, i.e., never attempts to color more than two bidders red for a given subset $S$. Note that such an attempt will not be made from \textsc{Allocate} due to the conditions at lines 8 and 14. Assume for contradiction it happens because of forward propagation from $S$ to $S\cup\{b\}$. Then there are two bidders $b_1\ne b_2$ distinct from $b$ who are already colored red for signal profile $S\cup\{b\}$ at the beginning of iteration $S$. This means that all other bidders including $b$ are colored black for signal profile $S\cup\{b\}$ at the beginning of iteration $S$ (see lines 44-48, in which after every time a bidder is colored red the algorithm checks if there are now two red bidders and if so colors all other bidders {black}). But then $(b,S\cup\{b\})$ is black when Algorithm~\ref{alg:main} attempts to color it red during forward propagation, contradiction.
\qed
\end{proof}

\subsection{Properties of SOS Valuations}
\label{sub:main-lemmas}

In this section we state and prove two lemmas for SOS valuations. The first is used in the proof of Theorem~\ref{thm:main}, and the second is the workhorse driving the proof in Section~\ref{sub:2-approx} that either {\bf Priority~1} or {\bf Priority~2} always hold. Very roughly, the first lemma (Lemma~\ref{lemma1}) states that if bidders $i,j$ have higher values than bidder $k$ when their own signals are low, then bidder $k$'s value cannot exceed their sum when their signals are high.
The second lemma (Lemma~\ref{lemma2}) is more complex, and to give intuition for what it states we provide a visualization in Figures~\ref{fig:block}-\ref{fig:case_full} (we use a similar visualization to sketch our main proof in Section~\ref{sub:2-approx}). The proof is by induction and is deferred to Appendix~\ref{appx:lemma2-pf}. 

\begin{lemma}
\label{lemma1}
Consider a subset $S'\subseteq [n]$ and three bidders $i,j,k$ (not necessarily distinct) with SOS valuations over binary signals; let $S^*=S'\cup\{i,j\}$. If the following three inequalities hold simultaneously then they all hold with equality:
\begin{itemize}
    \item $v_i(S^*\setminus\{i\})\geq v_k(S^*\setminus\{i\})$;
    \item $v_j(S^*\setminus\{j\})\geq v_k(S^*\setminus\{j\})$;
    \item $v_k(S^*)\ge v_i(S^*)+v_j(S^*)$.
\end{itemize}
\end{lemma}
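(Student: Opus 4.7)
The plan is to bracket the sum $v_i(S^*\setminus\{i\})+v_j(S^*\setminus\{j\})$ between two quantities that end up being equal, forcing every $\geq$ in between to collapse to an equality. For the lower side I would use subadditivity of $v_k$ (which follows from submodularity together with nonnegativity, a special case of Proposition~\ref{pro:subadd} with $S=\emptyset$); for the upper side I would use the standing model assumption that each $v_\ell$ is weakly increasing in every coordinate, in particular in its own signal.

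Concretely, I would first apply subadditivity of $v_k$ to the subsets $S^*\setminus\{i\}$ and $S^*\setminus\{j\}$. Under the intended reading of ``three bidders'' in which $i\ne j$, their union is exactly $S^*$, so
$$
v_k(S^*\setminus\{i\})+v_k(S^*\setminus\{j\})\;\geq\;v_k(S^*).
$$
Splicing the first two hypothesized inequalities onto the left and the third onto the right produces the chain
$$
v_i(S^*\setminus\{i\})+v_j(S^*\setminus\{j\})\;\geq\;v_k(S^*\setminus\{i\})+v_k(S^*\setminus\{j\})\;\geq\;v_k(S^*)\;\geq\;v_i(S^*)+v_j(S^*).
$$

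For the reverse direction, weak monotonicity of $v_i$ in $s_i$ and of $v_j$ in $s_j$ yields $v_i(S^*)\geq v_i(S^*\setminus\{i\})$ and $v_j(S^*)\geq v_j(S^*\setminus\{j\})$; summing gives $v_i(S^*)+v_j(S^*)\geq v_i(S^*\setminus\{i\})+v_j(S^*\setminus\{j\})$. Combined with the chain above, the two outer expressions must coincide, so every intermediate $\geq$ collapses to $=$. In particular, the first two hypothesized inequalities, whose sum becomes tight while each summand still satisfies $\geq$, must individually be equalities, and the third hypothesis directly becomes $v_k(S^*)=v_i(S^*)+v_j(S^*)$, giving the conclusion.

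The only subtlety --- rather than a genuine obstacle --- is the implicit assumption $i\ne j$ needed so that $(S^*\setminus\{i\})\cup(S^*\setminus\{j\})=S^*$ in the subadditivity step. This is exactly the regime in which the lemma is used in the proof of Theorem~\ref{thm:main}, where $i$ and $j$ are the two distinct ``red'' bidders allocated with probability $\tfrac{1}{2}$. Once $i\ne j$ is in hand, the argument is a clean two-sided squeeze and I do not anticipate further technical difficulties.
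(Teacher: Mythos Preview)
Your argument is correct and follows essentially the same strategy as the paper: add the three hypotheses, invoke subadditivity for $v_k$, and close the loop with monotonicity of $v_i,v_j$. The paper's presentation differs only cosmetically---it passes to the marginals $u_\ell(\cdot)=v_\ell(\cdot\mid S')$ and applies Proposition~\ref{pro:subadd} to $u_k$ (rather than to $v_k$ itself), which yields the slightly stronger intermediate conclusion $v_k(S')=0$; your direct squeeze avoids introducing the $u_\ell$'s and is arguably cleaner.

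One small point worth noting: because the paper applies subadditivity to $u_k$ on the two-element set $\{i,j\}$ rather than to $v_k$ on $S^*\setminus\{i\}$ and $S^*\setminus\{j\}$, its argument goes through verbatim even when $i=j$ (the ``not necessarily distinct'' clause), whereas your union step genuinely needs $i\ne j$. You already flag this, and you are right that the only invocation of Lemma~\ref{lemma1} in the proof of Theorem~\ref{thm:main} has $i\ne j$; so for the purposes of the paper your version suffices.
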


\begin{proof}
Define $u_i(\cdot)=v_i(\cdot\mid S')$ for every $i\in [n]$. Using this notation, to prove the lemma we need to show that if the following three inequalities hold simultaneously, they must all hold with equality:
\begin{itemize}
    \item $u_i(j)+v_i(S')\geq u_k(j)+v_k(S')$;
    \item $u_j(i)+v_j(S')\geq u_k(i)+v_k(S')$;
    \item $u_k(ij)+v_k(S')\ge u_i(ij)+v_i(S')+u_j(ij)+v_j(S')$.
\end{itemize}
Assume the inequalities hold. Summing them and simplifying we get
\begin{eqnarray}
u_i(j)+u_j(i)+u_k(ij)\ge  u_k(j)+u_k(i)+v_k(S')+u_i(ij)+u_j(ij).\label{eq:sum}
\end{eqnarray}
We now use the fact that $u_k(\cdot)$ is subadditive (Proposition~\ref{pro:subadd}), so $u_k(ij)\le  u_k(j)+u_k(i)$. Thus by Inequality~\eqref{eq:sum},
$$
u_i(j)+u_j(i)\ge v_k(S')+u_i(ij)+u_j(ij).
$$
By monotonicity of set functions $u_i$ and $u_j$, $u_i(j)\leq u_i(ij)$ and $u_j(i)\leq u_j(ij)$. We conclude that $v_k(S')\le 0$, which can hold only with equality. But this equality would be violated if one of the three inequalities was strict, completing the proof.\qed
\end{proof}

\begin{lemma}
\label{lemma2}
Consider a subset $S'\subseteq [n]$ and $3+\ell_1+\ell_2+\ell_3$ bidders $E=\{i,j,k$, $t_1,\dots,t_{\ell_1}$,$t_1',\dots,t_{\ell_2}'$,$t''_1,\dots,t''_{\ell_3}\}$ (not necessarily distinct) with SOS valuations over binary signals.
Let $t_{\ell_1+1}=k$, $t_{\ell_2+1}'=k, t''_{\ell_3+1}=i$,
and $S^* = S'\cup E$. If the following $3+\ell_1+\ell_2+\ell_3$ inequalities hold simultaneously then they all hold with equality:
\begin{itemize}
    \item $v_i(S^*\setminus\{i\}) \geq  v_{t_1}(S^*\setminus\{i\})+v_j(S^*\setminus\{i\})$; 
    
    \item $\forall h\in[\ell_1]: v_{t_h}(S^*\setminus\{i,t_1,...,t_h\}) \geq  v_{t_{h+1}}(S^*\setminus\{i,t_1,...,t_h\}) + v_j(S^*\setminus\{i,t_1,...,t_h\})$; 
    
    \item $v_j(S^*\setminus\{j\}) \geq  v_{t_1'}(S^*\setminus\{j\})+v_i(S^*\setminus\{j\})$;
    
    \item $\forall h\in[\ell_2]: v_{t_h'}(S^*\setminus\{j,t_1',...,t_h'\}) \geq  v_{t_{h+1}'}(S^*\setminus\{j,t_1',...,t_h'\})+v_i(S^*\setminus\{j,t_1',...,t_h'\})$;
    
    \item $v_k(S^*\setminus\{k\}) \geq  v_{t_1''}(S^*\setminus\{k\})+v_j(S^*\setminus\{k\})$;
    
    \item $\forall h\in[\ell_3]: v_{t_h''}(S^*\setminus\{k,t_1'',...,t_h''\}) \geq  v_{t_{h+1}''}(S^*\setminus\{k,t_1'',...,t_h''\})+v_j(S^*\setminus\{k,t_1'',...,t_h''\})$.
\end{itemize}
\end{lemma}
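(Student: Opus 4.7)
The plan is to extend the summation-plus-subadditivity argument of Lemma~\ref{lemma1} to chains of arbitrary length, by induction on $N:=\ell_1+\ell_2+\ell_3$. The key reparametrization is $u_x(\cdot):=v_x(\cdot\mid S')$, which by Proposition~\ref{pro:subadd} is subadditive (and remains monotone); this is exactly the property needed to bound ``wide'' $u$-terms on the LHS of the summed inequality by narrower ones on the RHS.

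For the base case $N=0$, the three chains collapse to three inequalities of the cyclic form $v_x(S^*\setminus\{x\})\ge v_y(S^*\setminus\{x\})+v_z(S^*\setminus\{x\})$ over $(i,j,k)$. I would sum them, substitute $u$'s, and bound each of the wide terms $u_i(\{j,k\}),u_j(\{i,k\}),u_k(\{i,j\})$ via subadditivity followed by monotonicity, e.g.\ $u_i(\{j,k\})\le u_i(\{j\})+u_i(\{k\})\le u_i(\{i,j\})+u_i(\{i,k\})$. These three bounds cancel their matching RHS terms, reducing the summed inequality to $0\ge v_i(S')+v_j(S')+v_k(S')$. Non-negativity of values then forces equality throughout, from which every individual chain slack is seen to vanish.

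For the inductive step ($N\ge 1$), WLOG $\ell_1\ge 1$. My plan is to first telescope each chain into a single compressed inequality using monotonicity of the intermediate $v_{t_h},v_{t_h'},v_{t_h''}$'s: Chain~1 becomes $v_i(T_0^{(1)})\ge v_k(T_{\ell_1}^{(1)})+\sum_{h=0}^{\ell_1} v_j(T_h^{(1)})$ up to a non-negative monotonicity surplus, with $T_h^{(1)}=S^*\setminus\{i,t_1,\dots,t_h\}$, and analogously for Chains~2 and~3. Summing the three compressed inequalities and passing to $u$'s yields an expression whose LHS features three wide $u$-terms ($u_i(E\setminus\{i\}),u_j(E\setminus\{j\}),u_k(E\setminus\{k\})$) and whose RHS features $u$-values at the prefix-removed sets $E\setminus A_{c,h}$ with multiplicities reflecting the chain lengths. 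Iterated subadditivity and monotonicity applied to each wide term should produce cancellations that reduce the whole thing to an inequality of the form $0\ge(\ell_2+1)v_i(S')+(\ell_1+\ell_3+1)v_j(S')+v_k(S')$, forcing all chain slacks and monotonicity surpluses to vanish simultaneously.

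The main obstacle I anticipate is the bookkeeping in the last step: the subadditive decomposition of each wide $u$-term must line up exactly with the RHS multiset, and this matching is delicate when the bidders $i,j,k,t_h,t_h',t_h''$ are not all distinct (the lemma explicitly permits coincidences). To handle this I would add a secondary reduction: whenever two chain elements coincide, merge the corresponding inequalities to pass to an instance of strictly smaller $N$ and invoke the inductive hypothesis there. In the fully distinct case, each wide term $u_x(E\setminus\{x\})$ admits a clean subadditive split along the union of the chain-prefix sets that appear on the RHS, and the three-way symmetry across the chains supplies the matching counterparts term-for-term.
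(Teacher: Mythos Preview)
Your base case ($N=0$) coincides with the paper's. For $N\ge 1$ the paper does something structurally different and simpler: it merges only the \emph{last two} inequalities of a single chain (say chain~1), drops the duplicated $v_j$ term by non-negativity, absorbs $t_{\ell_1}$ into the base set $S'$, and applies the inductive hypothesis with $\ell_1-1$ in place of $\ell_1$. This single-chain reduction never requires a global subadditive matching and is oblivious to cross-chain coincidences among the bridge bidders.

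Your global-matching plan has a concrete gap in the non-distinct case. Consider the wide term $u_k(E\setminus\{k\})$: the only RHS $u_k$-terms in your summed inequality sit at $E\setminus\{i,t_1,\dots,t_{\ell_1}\}$ and $E\setminus\{j,t_1',\dots,t_{\ell_2}'\}$, so any subadditive split $P_1\cup P_2\supseteq E\setminus\{k\}$ with $P_l$ contained in the $l$-th set must miss every element common to $\{i,t_1,\dots,t_{\ell_1}\}$ and $\{j,t_1',\dots,t_{\ell_2}'\}$. If $t_a=t_b'$ for some $a,b$ (which the lemma expressly allows), that element lies in $E\setminus\{k\}$ yet cannot be covered, and the matching fails outright. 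Your ``secondary reduction'' does not say how such a \emph{cross-chain} coincidence collapses to a smaller instance: the two inequalities involving $t_a=t_b'$ live at incomparable signal sets ($S^*\setminus\{i,t_1,\dots,t_a\}$ versus $S^*\setminus\{j,t_1',\dots,t_b'\}$) and carry different right-hand companions ($v_j$ versus $v_i$), so no merge is evident. Note also that your induction on $N$ does no work in the distinct branch, which is a self-contained direct argument; the entire inductive structure rests on this unspecified coincidence branch.
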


\subsection*{Visualization of Lemma~\ref{lemma2}.} 

Consider the case $\ell_1=\ell_2=\ell_3=0$. 
The first inequality of Lemma~\ref{lemma2} in this case, using that $t_1=t_{\ell_1+1}$ (which equals $k$ by definition), is:
\begin{equation}
    v_i(S^*\setminus\{i\})\geq v_k(S^*\setminus\{i\})+v_j(S^*\setminus\{i\}).\label{eq:block}
\end{equation}
We introduce a visualization of Inequality~\eqref{eq:block} as the $2\times n$ ``block'' shown in Fig.~\ref{fig:block}. The columns of the block correspond to the bidders. 
The first row of the block represents which bidders participate in the inequality (in this case $i,j,k$), with the bidder on the greater (left) side of the inequality depicted in striped red (in this case $i$); the second row represents the signal set in the inequality (in this case $S^*\setminus{i}$), with the signals not in the set depicted in white (in this case $i$). 


\begin{figure}[t]
  \noindent\makebox[\textwidth]{%
  \includegraphics[width=0.5\textwidth]{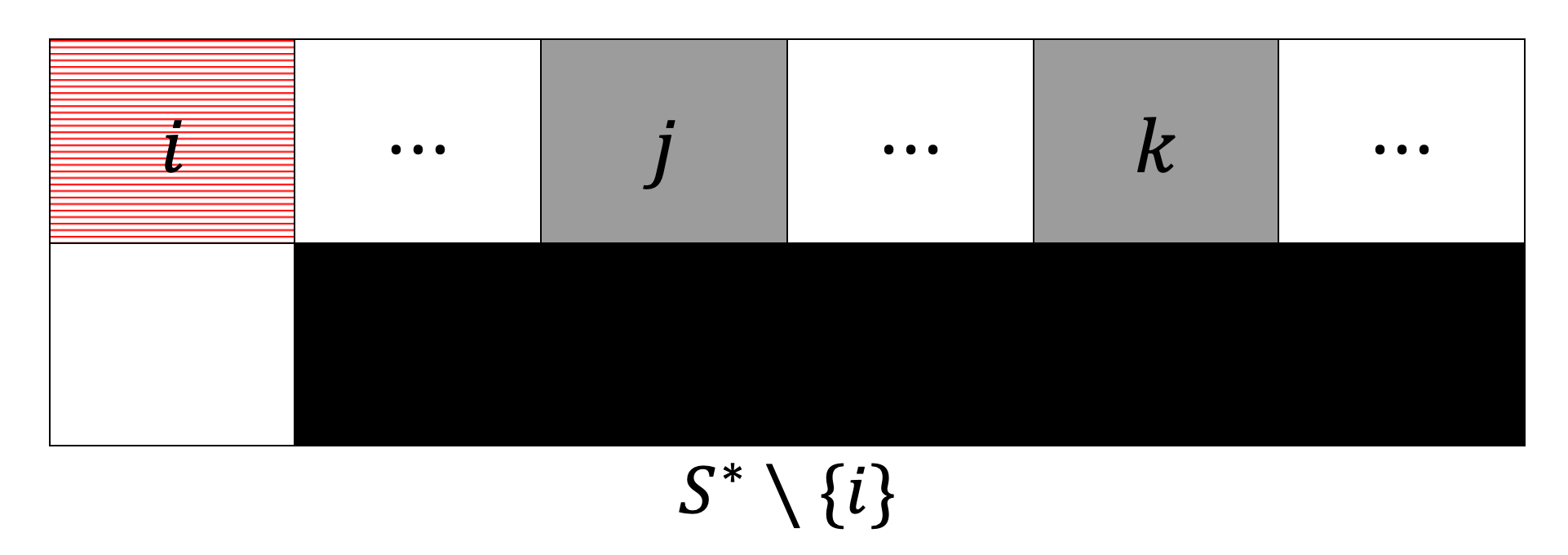}}
  \caption{Visualization of Inequality~\eqref{eq:block}.}
  \label{fig:block}
\end{figure}




\begin{figure}[t]
  \noindent\makebox[\textwidth]{%
  \includegraphics[width=1\textwidth]{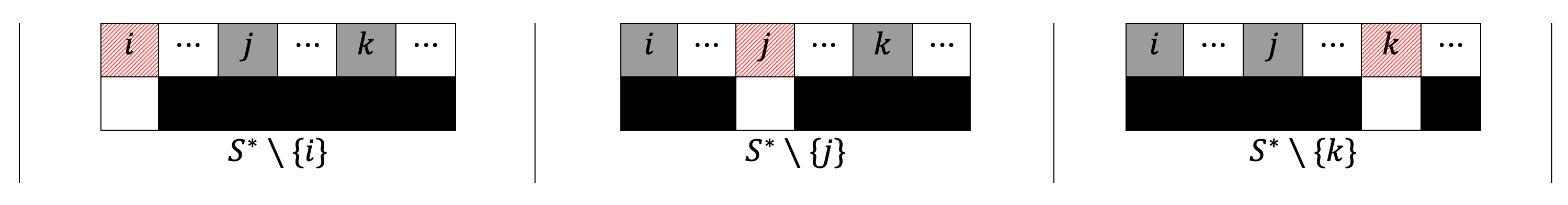}}
  \caption{Visualization of the inequalities of Lemma~\ref{lemma2} for $\ell_1=\ell_2=\ell_3=0$.}
  \label{fig:case0}
\end{figure}


We can use the above visualization
to depict all inequalities of Lemma~\ref{lemma2}. For the case that $\ell_1=\ell_2=\ell_3=0$, these are shown in Fig.~\ref{fig:case0}.
Consider now the case $\ell_1=\ell_2=\ell_3=1$, with the following inequalities (among others): 
\begin{eqnarray*}
v_i(S^*\setminus\{i\})&\geq& v_j(S^*\setminus\{i\})+v_{t_1}(S^*\setminus\{i\});\\
v_{t_1}(S^*\setminus\{i,t_1\})&\geq& v_j(S^*\setminus\{i,t_1\})+v_k(S^*\setminus\{i,t_1\}).
\end{eqnarray*}
In this case we have a fourth bidder $t_1$ who ``bridges'' between $i,j,k$. Instead of an inequality requiring that $v_i\ge v_j+v_k$ directly, here it is required that $v_i\geq v_j+v_{t_1}$, and in turn $v_{t_1} \ge v_j+v_k$ but with a different set of signals. The full system of inequalities for this case (with $6$ inequalities) appears in Fig.~\ref{fig:case1}. 

\begin{figure}[t]
  \noindent\makebox[\textwidth]{%
  \includegraphics[width=1\textwidth]{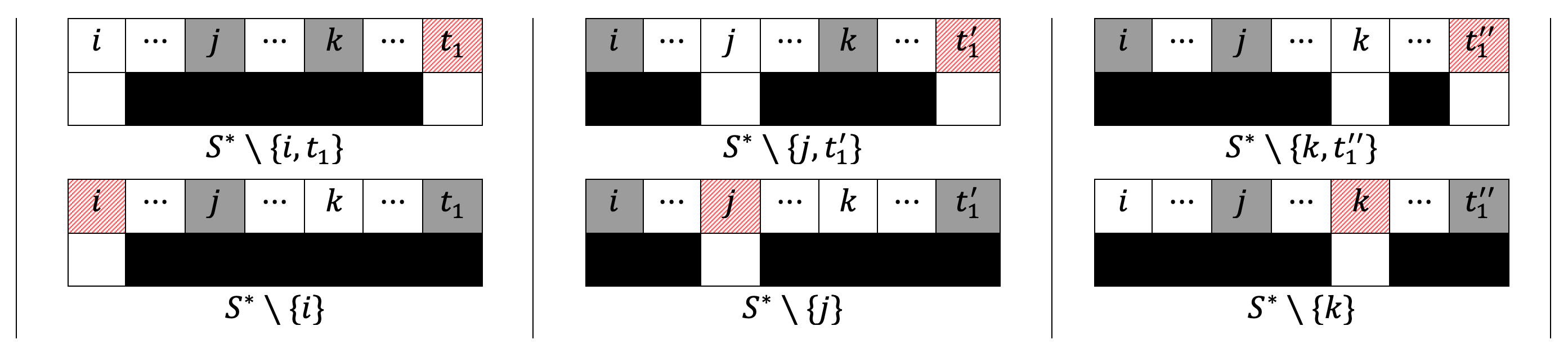}}
  \caption{Visualization of the inequalities of Lemma~\ref{lemma2} for $\ell_1=\ell_2=\ell_3=1$.}
  \label{fig:case1}
\end{figure}

More generally, Lemma~\ref{lemma2} holds for any number of ``bridge'' bidders. The general case is shown in Fig.~\ref{fig:case_full} in Appendix~\ref{appx:lemma2-pf}.



\subsection{When Are Priorities 1 or 2 Guaranteed}
\label{sub:2-approx}

In this section we prove the following lemma (some details of the proof are deferred to Appendix~\ref{appx:2-approx}):

\begin{lemma}
\label{lem:2-approx}
Assume Algorithm~\ref{alg:main} runs on $n$ bidders with SOS valuations over binary signals. Then for every $S\subseteq[n]$, if at the beginning of iteration $S$ there are less than two red bidders, either {\bf Priority~1} or {\bf Priority~2} must hold.  
\end{lemma}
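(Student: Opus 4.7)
The plan is to argue by contradiction: suppose that at the start of iteration $S$ there are strictly fewer than two red bidders in column $S$, yet neither \textbf{Priority 1} nor \textbf{Priority 2} holds. I will trace this failure through the algorithm's propagations to construct a system of strict inequalities that matches the hypotheses of Lemma~\ref{lemma2}; that lemma then forces every such inequality to hold with equality, contradicting strictness.

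First, I analyze why \textbf{Priority 2} must fail. Since \textbf{Priority 2} only requires a highest-valued bidder whose cell is not black, its failure forces $(k,S)$ to be black for every $k$ with $v_k(S)=\OPT(S)$. A case analysis of the algorithm's only black-producing events -- the two-reds trigger inside \textsc{ColorRed} together with the forward and backward propagations -- shows that a cell $(b,X)$ with $b\in X$ cannot become black before iteration $X$ itself has executed. Hence any failing highest bidder $k$ must satisfy $k\notin S$, and $(k,S)$ was reached only by backward propagation from $(k,S\cup\{k\})$ black. In turn, $(k,S\cup\{k\})$ was blackened only because column $S\cup\{k\}$ had already acquired two red bidders $(b_1,S\cup\{k\})$ and $(b_2,S\cup\{k\})$ with $b_1,b_2\in S$; each of these was forward-propagated from an iteration $(S\cup\{k\})\setminus\{b_i\}$ at which $b_i$ was the chosen \textbf{Priority 2} bidder, so in particular $v_{b_i}((S\cup\{k\})\setminus\{b_i\})=\OPT((S\cup\{k\})\setminus\{b_i\})$.

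The heart of the proof is to extract three chains of strict inequalities. At iteration $(S\cup\{k\})\setminus\{b_1\}$, since $b_1$ was favored via \textbf{Priority 2}, \textbf{Priority 1} failed there. If column $(S\cup\{k\})\setminus\{b_1\}$ carries no existing red at the start of that iteration, the failure of Priority 1 with the pair $(k,b_2)$ yields $v_{b_1}(\cdot)>v_k(\cdot)+v_{b_2}(\cdot)$; otherwise the existing red $t_1$ -- which was itself \textbf{Priority 2} at an even earlier iteration -- enters the pair and yields $v_{b_1}(\cdot)>v_{t_1}(\cdot)+v_{b_2}(\cdot)$, and the same analysis applies recursively to $t_1$, building a chain $t_1,t_2,\ldots,t_{\ell_1}$ that terminates once the relevant column has no existing red (producing $v_{t_{\ell_1}}(\cdot)>v_k(\cdot)+v_{b_2}(\cdot)$, matching $t_{\ell_1+1}=k$). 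A symmetric analysis starting from iteration $(S\cup\{k\})\setminus\{b_2\}$ produces a chain $t_1',\ldots,t_{\ell_2}'$ terminating at $k$ with $v_{b_1}$ on the right, and a third analysis starting from the failure of Priority 1 at iteration $S$ itself produces a chain $t_1'',\ldots,t_{\ell_3}''$ terminating at $b_1$ (the case of one existing red bidder at the start of iteration $S$ is handled by anchoring this last chain at that red bidder). Validity of each intermediate pair is verified from the inductive hypothesis and the structural observation that cells $(b,X)$ with $b\in X$ are never black before iteration $X$.

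Setting $i=b_1$, $j=b_2$, and $S^*=S\cup\{k\}$, the three chains constructed above instantiate exactly the three families of hypotheses of Lemma~\ref{lemma2} with suitable $\ell_1,\ell_2,\ell_3\ge 0$. By Lemma~\ref{lemma2}, every one of these inequalities must hold with equality, yet each was derived from a strict \textbf{Priority 1} failure -- the desired contradiction. The main technical obstacle will be carrying out the chain construction rigorously: at each intermediate iteration in the recursion one must identify the existing red (if any), verify that the ``witness'' pair invoked by Priority 1 satisfies both colorability and condition~3 (no other red outside the pair), and show that the recursion strictly shrinks the set $S^*\setminus\{i,t_1,\ldots,t_h\}$ so that it terminates.
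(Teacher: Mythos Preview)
Your plan is essentially the paper's own argument: trace the black cell at the highest bidder $k$ through backward propagation to column $S^*=S\cup\{k\}$, identify the two forward-propagated reds $i=b_1,j=b_2$ there, build the three bridge-bidder chains from repeated \textbf{Priority 1} failures at the earlier iterations, and feed them into Lemma~\ref{lemma2}; the paper merely frames this constructively (via Observations~\ref{obs:no_reds}--\ref{obs:one-red} and weak inequalities, concluding that a specific pair satisfies \textbf{Priority 1}) rather than as a contradiction from strict ones. One small caveat worth tightening: your structural claim that a cell $(b,X)$ with $b\in X$ cannot be black before iteration $X$ is false unconditionally---two reds forward-propagated into column $X$ will black out every other cell, including high-signal ones, before iteration $X$ is reached---but it does hold at every column where you actually invoke it, since at each such column the algorithm made a \textbf{Priority 2} choice and therefore had fewer than two reds at its start.
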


We begin with two observations that will be useful in the proof of Lemma~\ref{lem:2-approx}.

\begin{observation}
\label{obs:no_reds}
If at the beginning of iteration $S$ no bidder is colored {red}, and during the iteration bidder $i$ whose signal is low ($i\notin S$) is colored {red}, then for every pair $j,k\in S$, $v_i(S)\geq v_j(S)+v_k(S)$.
\end{observation}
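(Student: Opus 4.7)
The plan is to show that when Priority~2 colors a low-signal bidder $i$ red at iteration $S$, it is because Priority~1 failed on \emph{every} pair $j,k \in S$, and moreover that this failure can only be caused by condition~4 being violated (never conditions~1--3). That translates directly into the desired value inequality.

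First, I would invoke Observation~\ref{obs:low-red-is-highest} to obtain $v_i(S) = \OPT(S)$. Since condition~1 of Priority~1 demands that both favored bidders have their signals in $S$, a low-signal bidder cannot be colored via Priority~1, so its red coloring must be via Priority~2. Therefore Priority~1 did not apply at iteration $S$, i.e., no pair simultaneously satisfied all four of its conditions.

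The main obstacle is verifying that conditions~1--3 of Priority~1 are in fact satisfied for every pair $j,k \in S$ at the start of iteration $S$, so that only condition~4 can be the cause of failure. Conditions~1 and 3 are immediate from the hypothesis. For condition~2, I would argue that no cell $(b, S)$ with $b \in S$ is black at the start of iteration $S$, by enumerating the possible origins of a call to \textsc{ColorBlack}$(b, S)$: (a)~directly from \textsc{Allocate}'s iteration of $S$, which can only occur during iteration $S$, not before; (b)~from within \textsc{ColorRed} upon column $S$ acquiring a second red entry, which would require some cell in column $S$ to have already been red at the start of iteration $S$, contradicting the hypothesis; or (c)~backward propagation from $(b,T)$ with $b \in T$ to $(b, T \setminus \{b\})$, whose target always satisfies ``the propagated bidder is not in the subset'', so it cannot produce a black $(b, S)$ when $b \in S$. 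Hence each $(b, S)$ with $b \in S$ is white or red at the start of iteration $S$, and since by hypothesis nothing is red in column $S$, every such cell is white, so condition~2 holds for every pair $j, k \in S$.

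Combining, for every pair $j,k \in S$ all of conditions~1--3 of Priority~1 hold, so the failure of Priority~1 forces condition~4 to fail on this pair: $v_j(S) + v_k(S) < \OPT(S) = v_i(S)$, which is stronger than the claimed inequality.
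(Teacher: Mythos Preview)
Your proposal is correct and follows essentially the same approach as the paper's proof: both argue that if some pair $j,k\in S$ had $v_j(S)+v_k(S)\ge \OPT(S)$, then Priority~1 would have applied and bidder $i$ would not have been colored via Priority~2. The paper's proof, however, simply asserts that ``$j,k$ \dots\ can be colored red at iteration $S$'' without justification; your careful case analysis (ruling out that any $(b,S)$ with $b\in S$ is black at the start of iteration~$S$, by tracing the three possible origins of a \textsc{ColorBlack} call) fills in precisely the detail the paper omits.
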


\begin{proof}
Assume for contradiction that $v_j(S)+v_k(S)>v_i(S)$, then since $j,k$ both have high signals and can be colored {red} at iteration $S$, they have {\bf Priority~1} and should be colored in place of bidder $i$, contradiction.
\qed
\end{proof}

\begin{observation}
\label{obs:one-red}
If at the beginning of iteration $S$ only bidder $t$ whose signal is high ($t\in S$) is colored {red}, and during the iteration bidder $i$ whose signal is low ($i\notin S$) is colored {red}, then for every $j\in S$,  $v_i(S)\geq v_j(S)+v_t(S)$.
\end{observation}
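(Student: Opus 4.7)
The plan is to show that bidder $i$ must have been coloured red via Priority~2, and then extract the inequality from the failure of Priority~1 at every admissible candidate pair.

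First, I would apply Observation~\ref{obs:low-red-is-highest} to the cell $(i,S)$ (coloured red with $i\notin S$) to conclude $v_i(S)=\OPT(S)$, so $i$ satisfies the value condition of Priority~2 and, by the algorithm's logic, must have been selected under Priority~2 after verifying that no Priority~1 pair exists.

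Next I would characterise the candidate Priority~1 pairs at iteration~$S$. Because $t$ is already red at the start of the iteration, condition~3 of Priority~1 (``no other bidder is red'') forces any candidate pair to contain $t$, so the pairs to be ruled out are exactly those of the form $(j,t)$ with $j\in S\setminus\{t\}$. I would then argue that each such $j$ satisfies conditions~1 and~2 of Priority~1. Condition~1 holds by assumption ($j\in S$). For condition~2 I need $(j,S)$ to be white or red; since only $t$ is red it suffices to show $(j,S)$ is not black. Direct colouring of $(j,S)$ happens only during iteration~$S$ itself, which has not yet begun, so I only have to rule out backward propagation. Backward black propagation into $(j,S)$ would require a source cell $(j,S')$ with $S=S'\setminus\{j\}$ and $j\in S'$, forcing $S'=S\cup\{j\}$; but $j\in S$ gives $S\cup\{j\}=S$, a contradiction. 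Hence $(j,S)$ is white and $j$ can be coloured red.

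Finally, since conditions~1--3 of Priority~1 are satisfied by $(j,t)$ yet Priority~1 was not invoked, condition~4 must fail: $v_j(S)+v_t(S)<\OPT(S)=v_i(S)$, which yields $v_i(S)\geq v_j(S)+v_t(S)$ (in fact strictly) for every $j\in S$ distinct from $t$, as claimed. The main obstacle will be the colour-tracking bookkeeping in the second step, namely confirming that no cell $(j,S)$ with $j\in S$ could have been coloured black prior to iteration~$S$; once the propagation rules are parsed carefully the rest of the argument is short, relies only on the algorithm's priority logic, and does not invoke the SOS structure or the lemmas of Section~\ref{sub:main-lemmas}.
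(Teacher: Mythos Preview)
Your approach matches the paper's: argue that if the inequality failed for some $j\in S$ then the pair $(j,t)$ would satisfy all four Priority~1 conditions, contradicting that $i$ was coloured via Priority~2. Your write-up is in fact more careful than the paper's two-line proof, which simply asserts ``$j$ can be colored red since there are no other reds besides~$t$'' without spelling out why $(j,S)$ cannot already be black.

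One small bookkeeping point in your colour-tracking step: the dichotomy ``direct colouring happens only during iteration~$S$'' versus ``backward propagation'' is not exhaustive. There is a third route by which $(j,S)$ can be coloured black \emph{before} iteration~$S$ begins: when forward red propagation from some earlier iteration $S\setminus\{b\}$ brings column~$S$ to two reds, \textsc{ColorRed} immediately calls \textsc{ColorBlack} on every remaining cell of that column, and this occurs during iteration $S\setminus\{b\}$, not during iteration~$S$. This possibility is instantly excluded by your standing hypothesis that only $t$ is red in column~$S$, so the argument goes through once you add a sentence covering it; but as written, the claim that non-propagation black colourings of column~$S$ can only occur during iteration~$S$ is false.
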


\begin{proof}
Assume for contradiction that $v_j(S)+v_t(S)>v_i(S)$, then since $j,t$ both have high signals and can be colored {red} at iteration $S$ ($t$ is already {red} and $j$ can be colored {red} since there are no other reds besides $t$), they have {\bf Priority~1} and should be colored in place of bidder $i$, contradiction.
\qed
\end{proof}


We can now prove our main lemma; missing details appear in Appendix~\ref{appx:2-approx}.

\begin{proof}[Lemma~\ref{lem:2-approx}, sketch]
%
Fix an iteration $S$ with $<2$ red bidders at its beginning. By \emph{highest bidder} we mean the bidder whose value at $S$ equals $\OPT(S)$.
We split the analysis into cases;
the most challenging cases technically are when the highest bidder is colored black in column $S$, and there are either no {red} cells or a single {red} cell in this column at the beginning of the iteration. Here we focus on the first among these cases and remark at the end how to treat the second, showing in both why a {\bf Priority~1} pair exists in column $S$. The remaining cases are addressed in Appendix~\ref{appx:2-approx}.

\paragraph{Case 1: No red cells.} Assume that at the beginning of iteration $S$, the highest bidder is colored black and there are no {red} cells in column $S$. Denote the highest bidder by $k$ and observe that its color must have propagated backward from $(k,S\cup \{k\})$; let $S^*=S\cup\{k\}$. 
In column~$S^*$ there must therefore be two red bidders, whom we refer to as $i$ and $j$, due to which $k$ is colored black in this column.
Red must have propagated forward to column $S^*$ from $S^*\setminus\{i\}$ and $S^*\setminus\{j\}$. 
Fig.~\ref{fig:prop} shows the allocation status of the relevant bidders at the beginning of iteration~$S$ for subsets $S,S^*,S^*\setminus\{i\},S^*\setminus\{j\}$ -- we use the same visualization as in Section~\ref{sub:main-lemmas}, but with colors in the first row representing those set by the algorithm and arrows representing propagations. 

\begin{figure}[t]
  \noindent\makebox[\textwidth]{%
  \includegraphics[width=0.7\textwidth]{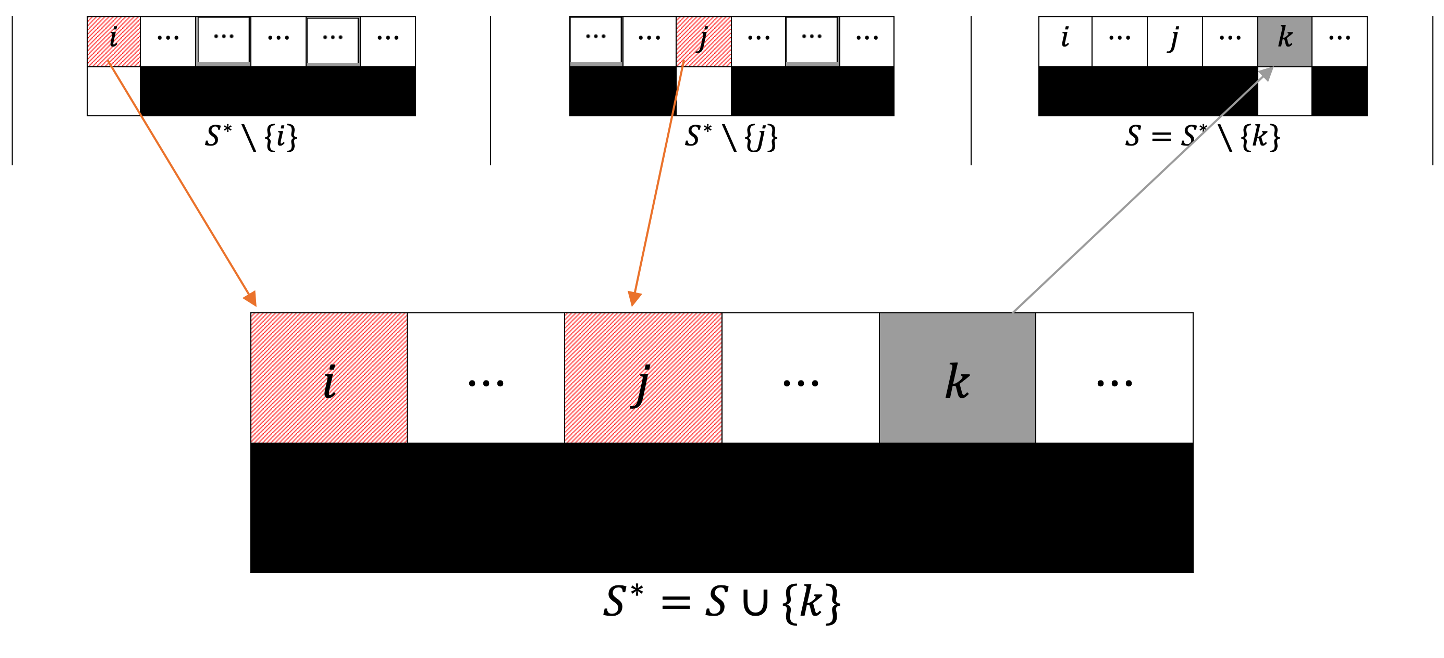}}
  \caption{Colorings at $S, S^*$, $S^*\setminus\{i\}$ and $S^*\setminus\{j\}$ given that $(k,S)$ is black.}
  \label{fig:prop}
\end{figure}

Towards establishing existence of a {\bf Priority~1} pair in column $S$, consider first the case in which the following two conditions hold: 
\begin{enumerate}
        \item At the beginning of iteration $S^*\setminus\{i\}$, no cells in that column are red; 
        \item At the beginning of iteration $S^*\setminus\{j\}$, no cells in that column are red.
\end{enumerate}
By Observation~\ref{obs:no_reds},
\begin{eqnarray}
    v_i(S^*\setminus\{i\})&\geq& v_j(S^*\setminus\{i\})+v_k(S^*\setminus\{i\});\label{eq:cond1}\\
    v_j(S^*\setminus\{j\})&\geq& v_i(S^*\setminus\{j\})+v_k(S^*\setminus\{j\}).\label{eq:cond2}
\end{eqnarray}
If both \eqref{eq:cond1} and \eqref{eq:cond2} hold, 
by Lemma~\ref{lemma2} with $\ell_1=\ell_2=\ell_3=0$
it cannot simultaneously hold that $v_k(S)> v_i(S)+v_j(S)$ (see Fig.~\ref{fig:case0} and related text). Thus $v_i(S)+v_j(S)\geq v_k(S)$, and pair $i,j$ has {\bf Priority~1}.

Now consider the case in which one of the two conditions does not hold, w.l.o.g.~Condition~(1). That is, at the beginning of iteration $S^*\setminus\{i\}$, a bidder $t_1$ is colored {red}. (There is only one such bidder since we know bidder $i$ cannot be red at the beginning of that iteration -- no forward propagation as $i\notin S^*\setminus\{i\}$ -- and that $i$ is colored red during the iteration.) 
By Observation~\ref{obs:one-red}, 
\begin{equation}
    \label{eq:nocond1}
    v_i(S^*\setminus\{i\})\geq v_j(S^*\setminus\{i\})+v_{t_1}(S^*\setminus\{i\}).
\end{equation}
Since $(t_1,S^*\setminus\{i\})$ is {red} at the beginning of iteration $S^*\setminus\{i\}$, the color red necessarily propagated forward from $S^*\setminus\{i,t_1\}$. If Condition~(1) now holds for $S^*\setminus\{i,t_1\}$ then 
by Observation~\ref{obs:no_reds}, 
\begin{equation}
    \label{eq:nocond1_2}
    v_{t_1}(S^*\setminus\{i,t_1\})\geq v_j(S^*\setminus\{i,t_1\})+v_k(S^*\setminus\{i,t_1\})
\end{equation}
If Inequalities~\eqref{eq:cond2}-\eqref{eq:nocond1_2} hold simultaneously,
then by Lemma~\ref{lemma2} with $\ell_1=1,\ell_2=\ell_3=0$ we can again conclude that 
$v_i(S)+v_j(S)\ge v_k(S)$, and pair $i,j$ has {\bf Priority~1}.  
Notice that $t_1$ is the ``bridge'' bidder we discussed in Section~\ref{sub:main-lemmas}. For visualization we note that Inequality~\eqref{eq:cond2} is the one depicted in Fig.~\ref{fig:case0} (middle) while Inequalities \eqref{eq:nocond1}-\eqref{eq:nocond1_2} are shown in Fig.~\ref{fig:case1} (left); these are the inequalities that correspond to $\ell_1=1$ and $\ell_2=0$.

If Condition~(1) does not hold for subset $S^*\setminus\{i,t_1\}$, then there is an additional ``bridge'' bidder $t_2$ that is colored {red} at $S^*\setminus\{i,t_1\}$, and we can possibly apply Lemma~\ref{lemma2} with $\ell_1=2$. If not, we continue in this way until either Condition~(1) holds or only $j,k$ remain in the subset. 
In either case, denote the final number of ``bridge'' bidders by $\ell_1$. 
In the latter case, either Observation~\ref{obs:no_reds} or Observation~\ref{obs:one-red} hold, and so 
\begin{equation*}
    v_{t_{\ell_1}}(\{j,k\}) \geq v_j(\{j,k\})+v_k(\{j,k\}).
\end{equation*}
By applying Lemma~\ref{lemma2} with $\ell_1>0$ (and $\ell_2=\ell_3=0$) we conclude that pair $i,j$ has {\bf Priority~1}.

Observe that the same analysis holds if both Condition~(1) and Condition~(2) are relaxed. In this case Lemma~\ref{lemma2} applies with $\ell_1>0,\ell_2>0$ (and $\ell_3=0$).


\paragraph{Case 2: Single red cell.}
Finally, we address the case in which there exists a red bidder $t_1''$ in column~$S$ at the beginning of iteration $S$. We can write $S$ as $S^*\setminus\{k\}$; the color red of $t_1''$ necessarily propagated forward from $S^*\setminus\{k,t_1''\}$. 
Assume the following third condition holds:
\begin{enumerate}
  \setcounter{enumi}{2}
  \item At the beginning of iteration $S^*\setminus\{k, t_1''\}$, no cells in that column are red.
\end{enumerate}
By Observation~\ref{obs:no_reds}, 
\begin{equation}
    \label{eq:noredsl3}
    v_{t_1''}(S^*\setminus\{k,t_1''\})\geq v_j(S^*\setminus\{k,t_1''\})+v_i(S^*\setminus\{k,t_1''\}).
\end{equation}
If Inequality~\eqref{eq:noredsl3} holds (alongside previous inequalities) then Lemma~\ref{lemma2} applies with $\ell_1>0,\ell_2>0, \ell_3=1$. 
If Condition~(3) does not hold for subset $S^*\setminus\{k,t_1''\}$, 
%
we continue as above, 
denoting the final number of ``bridge'' bidders by $\ell_3$.
%
By applying Lemma~\ref{lemma2} with $\ell_1>0, \ell_2>0$ and $\ell_3>0$, we conclude that pair $i,t_1''$ has {\bf Priority~1}.
\qed
\end{proof}

\section{Summary and Future Directions}
\label{sec:summary}

Tension between optimization and truthfulness is an important theme of algorithmic game theory. With interdependent values, this tension appears even without computational considerations. Since with interdependence there is an inherent clash between welfare maximization and truthfulness, the approximation toolbox comes in handy. We apply it to arguably the simplest possible setting (single-item auctions with binary signals), and get a tight understanding of the tradeoff (i.e., what fraction of the optimal welfare can be guaranteed by a truthful mechanism). Our results extend beyond single items.

Two promising future directions are: (i)~generalizing our results beyond binary signals, and (ii)~designing an ``on the fly'' tractable version of the 2-approximation truthful mechanism (i.e., a version that gets signal reports and returns an allocation only for the reported signal profile). For the former direction, 
non-binary signals pose additional challenges since two priorities are no longer sufficient in the algorithm, and additionally the propagation is more complex. In Appendix~\ref{over-binary} we present progress towards resolving these challenges (in particular, Lemma~\ref{lemma-int-signals} extends Lemma~\ref{lemma2} to functions over general integer signals).





%
%
%
%
\bibliographystyle{splncs04}
\bibliography{main.bib}

\newpage
\appendix

\section{Main Algorithm}
\label{appx:pseudocode}

In this section we give the pseudocode of our main algorithm (Algorithm~\ref{alg:main}), which proves Theorem~\ref{thm:main} by returning a truthful, $2$-approximate welfare maximizing mechanism for any given setting $v_1,\dots,v_n$. The algorithm's main procedure is \textsc{Allocate}, and it contains two sub-procedures \textsc{ColorRed} and \textsc{ColorBlack}. For completeness, in Appendix~\ref{appx:complexity} we analyze the algorithm's running time as a function of $n$.  

\begin{algorithm}                     
\begin{algorithmic}[1]
\caption{\label{alg:main} Mechanism Construction}

\State \textit{\textbf{Input:} Set of bidders $E=\{1,\dots,n\}$ and their valuation functions $V=(v_1,\dots,v_n)$, where $value(b,S)$ for $b\in E$ and $S\subseteq[n]$ returns $v_b(S)$.}
\State \textit{\textbf{Output:} Allocation rule $x$, where allocation$(b,S)$ returns $x_b(S)$ represented by the following colors -- red represents $x_b(S)=\frac{1}{2}$ and black represents $x_b(S)=0$.}
\State \textit{\textbf{Initialization:} For every $b,S$, initialize allocation$(b,S)$=white.
Let $S_1,S_2,\dots,S_{2^n}$ be an inclusion-compatible ordering of all subsets of $[n]$.
}
\\
\Function{Allocate}{$E,V$}
\For{$S\in (S_1,S_2,\dots,S_{2^n})$}
\State{cnt\_{reds} $\gets \left|\{b\in E\text{ s.t. allocation}(b, S)= \text{red}\}\right|$}
\If{cnt\_{reds} $= 2$}
\State{\textit{continue;}}\Comment{go to next subset}
\EndIf
\State max\_v $\gets$ max\{$value(b, S)$ : $b\in E$\}
\Comment{max\_v  $=\OPT(S)$}
\If{$\exists b_1,b_2$ s.t.~$b_1\neq b_2$ and 
$\{b_1,b_2\}\subseteq S$}
\If{allocation($b_1$, S) $\neq$ black and allocation($b_2$, S) $\neq$ black} 
\If{$\nexists b_3\notin \{b_1,b_2\}$ 
s.t. allocation($b_3$, S)$=$red}
\If{$value(b_1,S)+value(b_2,S)\geq$ max\_v}
\State{\Call{ColorRed}{$b_1$, $S$}} \Comment{\textbf{Priority 1}}
\State{\Call{ColorRed}{$b_2$, $S$}} \Comment{\textbf{Priority 1}}
\State{\textit{continue;}}
\EndIf
\EndIf
\EndIf
\EndIf

\If{$\exists b_1$ s.t. allocation($b_1$, $S$) $\neq$ black}
\If{$value(b_1, S)=$ max\_v}
\State{\Call{ColorRed}{$b_1$, $S$}} \Comment{\textbf{Priority 2}}
\For{$b_2$ s.t. allocation$(b_2, S)=$ white}
\State{\Call{ColorBlack}{$b_2$, $S$}} 
\EndFor
\EndIf
\EndIf
\EndFor
\EndFunction
\algstore{myalg}
\end{algorithmic}
\end{algorithm}

\begin{algorithm}
\small
\begin{algorithmic}[1]
\algrestore{myalg}
\Function{ColorRed}{{$b$}, {$S$}}
\If{allocation($b$, $S$) $=$ black}
\State Error(``Cannot color a black cell red")
\EndIf
\If{allocation($b$, $S$) $=$ red}
\State \Return
\EndIf
\If{$\exists b_1,b_2$ s.t.~$b_1,b_2\neq b$ and $b_1\neq b_2$, and allocation($b_1$, $S$) $=$ allocation($b_2$, $S$) $=$ red}
\State Error(``Cannot color more than two cells red")
\EndIf
\State allocation($b$, $S$) $\gets$ red
\If{$\exists b_1\neq b$ s.t. allocation($b_1$, $S$) $=$ red}
\Comment{\textbf{Two cells $b,b_1$ are red}}
\For{$b'\neq b,b_1$}
\State \Call{ColorBlack}{$b'$, $S$}
\Comment{\textbf{Color the others black}}
\EndFor
\EndIf
\If{$b\notin S$}
\State \Call{ColorRed}{$b$, $S\cup\{b\}$}
\Comment{\textbf{Propagate forward}}
\EndIf
\EndFunction
\Comment{\newline -----------------------------------------------------------------------------------------------------------}
\Function{ColorBlack}{{$b$}, {$S$}}
\If{allocation($b$, $S$) $=$ red}
\State Error(``Cannot color a red cell black")
\EndIf
\If{allocation($b$, $S$) $=$ black}
\State \Return
\EndIf
\State allocation($b$, $S$) $\gets$ black
\If{$b\in S$}
\State \Call{ColorBlack}{$b$, $S\setminus\{b\}$}
\Comment{\textbf{Propagate backward}}
\EndIf
\EndFunction

\end{algorithmic}
\end{algorithm}

\subsection{Computational Complexity}
\label{appx:complexity}

The loop in line 36 iterates over all subsets of bidders, that is, over $2^n$ sets. For each set:
\begin{enumerate}
    \item We count the {red} bidders in line $7$ which takes $O(n)$;
    \item We find the maximum value in line $11$ which takes $O(n)$;
    \item In lines $12$-$15$ we search for two bidders that satisfy {\bf Priority 1}, for which we iterate over all bidders once more to check there is no other {red} bidder. This takes $O(n^3)$;
    \item When two bidders are found, we call \textsc{ColorRed} twice. In \textsc{ColorRed} we check if two are already {red}, which takes $O(n)$, and if so we color all the others {black} with \textsc{ColorBlack} (which takes O(1)). Thus, \textsc{ColorRed} in total takes $O(n)$;
    \item In lines $23$-$26$ we iterate  over all bidders and search for a bidder with {\bf Priority 2}. If found we call \textsc{ColorRed}, which takes $O(n)$;
\end{enumerate}
In total, the algorithm's runtime complexity is $O(n^3\cdot 2^n)$.

\section{Missing Proofs from Section~\ref{sec:analysis}}
\label{appx:missing}


\subsection{Proof of Lemma~\ref{lemma2}}
\label{appx:lemma2-pf}
\label{appx:2-approx}
\begin{figure}[t]
  \noindent\makebox[\textwidth]{%
  \includegraphics[width=1\textwidth]{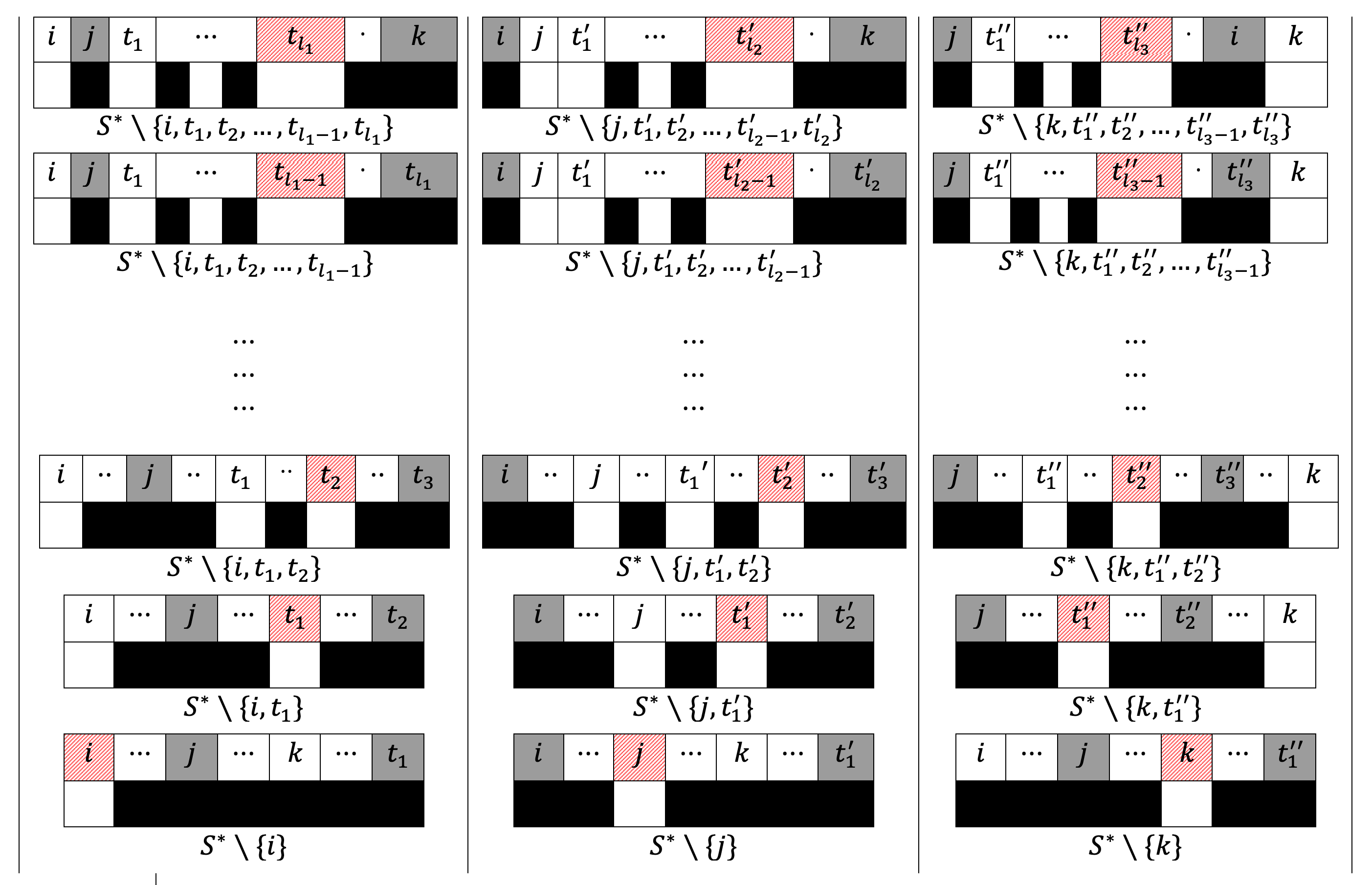}}
  \caption{Visualization of the inequalities of Lemma~\ref{lemma2} for the general case.}
  \label{fig:case_full}
\end{figure}
In this section we prove that if the inequalities of Lemma~\ref{lemma2} (as depicted in Fig.~\ref{fig:case_full}) hold simultaneously, then they must all hold with equality.

\begin{proof}
The proof is by induction. 
For the base cases define $u_{i'}(x)=v_{i'}(x\mid S')$ for every ${i'}\in[n]$. Note that we provide the analysis of Base Cases~2-4 to build intuition (Base Case 1 is sufficient to complete the proof).

\subsubsection*{Base cases.} 

\paragraph{Base Case 1: $\ell_1=\ell_2=\ell_3=0$.}  For any three bidders $i,j,k$, consider the following three inequalities:
\begin{itemize}
    \item $u_i(jk)+v_i(S')\geq u_j(jk)+v_j(S')+u_k(jk)+v_k(S')$;
    \item $u_j(ik)+v_j(S')\geq u_i(ik)+v_i(S')+u_k(ik)+v_k(S')$;
    \item $u_k(ij)+v_k(S')\geq u_i(ij)+v_i(S')+u_j(ij)+v_j(S')$.
\end{itemize}

We sum the  inequalities and replace $v_i(S')+v_j(S')+v_k(S')$ by $A$ to get: 
%
%
$$
u_i(jk)+u_j(ik)+u_k(ij)+A\geq u_j(jk)+u_k(jk)+u_i(ik)+u_k(ik)+u_i(ij)+u_j(ij)+2A.
$$
Since all factors are non-negative, $A\leq 2A$ and so: $$
u_i(jk)+u_j(ik)+u_k(ij)\geq u_j(jk)+u_k(jk)+u_i(ik)+u_k(ik)+u_i(ij)+u_j(ij).
$$
By subadditivity we know that:
\begin{itemize}
    \item $u_i(j)+u_i(k)\geq u_i(jk)$;
    \item $u_j(i)+u_j(k)\geq u_j(ik)$;
    \item $u_k(i)+u_k(j)\geq u_k(ij)$.
\end{itemize}
Thus, we get: $$\underbrace{u_i(j)}_\text{1}+\underbrace{u_i(k)}_\text{2}+\underbrace{u_j(i)}_\text{3}+\underbrace{u_j(k)}_\text{4}+\underbrace{u_k(i)}_\text{5}+\underbrace{u_k(j)}_\text{6}\geq$$$$ \underbrace{u_j(jk)}_\text{4}+\underbrace{u_k(jk)}_\text{6}+\underbrace{u_i(ik)}_\text{2}+\underbrace{u_k(ik)}_\text{5}+\underbrace{u_i(ij)}_\text{1}+\underbrace{u_j(ij)}_\text{3}$$
Notice that for each pair of summands with the same number, the left-hand side is at most the right-hand side by monotonicity. 
This completes the analysis of Base Case 1.

\paragraph{Base Case 2: $\ell_1=1, \ell_2=\ell_3=0$.} For any four bidders $i,j,k,t$, consider the following four inequalities:
\begin{itemize}
    \item $u_i(jkt)+v_i(S')\geq u_j(jkt)+v_j(S')+u_t(jkt)+v_t(S')$;
    \item $u_t(jk)+v_t(S')\geq u_j(jk)+v_j(S')+u_k(jk)+v_k(S')$;
    \item $u_j(ikt)+v_j(S')\geq u_i(ikt)+v_i(S')+u_k(ikt)+v_k(S')$;
    \item $u_k(ijt)+v_k(S')\geq u_i(ijt)+v_i(S')+u_j(ijt)+v_j(S')$.
\end{itemize}

We sum the inequalities 
and use that $v_i(S')+v_t(S')+v_j(S')+v_k(S')\leq 3v_j(S')+v_t(S')+2v_k(S')+2v_i(S')$ to get:
\begin{eqnarray*}
&u_i(jkt)+u_t(jk)+u_j(ikt)+u_k(ijt) \geq&\\ &u_j(jkt)+u_t(jkt)+u_j(jk)+u_k(jk)+u_i(ikt)+u_k(ikt)+u_i(ijt)+u_j(ijt).&
\end{eqnarray*}
By subadditivity we know that:
\begin{itemize}
    \item $u_i(j)+u_i(kt)\geq u_i(jkt)$;
    \item $u_j(i)+u_j(k)+u_j(t)\geq u_j(ikt)$;
    \item $u_k(it)+u_k(j)\geq u_k(ijt)$.
\end{itemize}
Thus, we get:
$$\underbrace{u_i(j)}_\text{1}+\underbrace{u_i(kt)}_\text{2}+\underbrace{u_t(jk)}_\text{3}+\underbrace{u_j(i)}_\text{4}+\underbrace{u_j(k)}_\text{5}+\underbrace{u_j(t)}_\text{6}+\underbrace{u_k(it)}_\text{7}+\underbrace{u_k(j)}_\text{8}\geq$$$$ \underbrace{u_j(jkt)}_\text{6}+\underbrace{u_t(jkt)}_\text{3}+\underbrace{u_j(jk)}_\text{5}+\underbrace{u_k(jk)}_\text{8}+\underbrace{u_i(ikt)}_\text{2}+\underbrace{u_k(ikt)}_\text{7}+\underbrace{u_i(ijt)}_\text{1}+\underbrace{u_j(ijt)}_\text{4}$$
As above, for each pair of summands with the same number, the left-hand side is at most the right-hand side by monotonicity. This completes the analysis of Base Case 2.

\paragraph{Base Case 3: $\ell_2=1, \ell_1=\ell_3=0$.}
The analysis of this case is symmetric to that of $\ell_1=1, \ell_2=\ell_3=0$.

\paragraph{Base Case 4: $\ell_1=\ell_2=0, \ell_3=1$.}
For any four bidders $i,j,k,t''_1$, consider the following four inequalities:
\begin{enumerate}
    \item $u_i(jkt''_1)+v_i(S')\geq u_j(jkt''_1)+v_j(S')+u_k(jkt''_1)+v_k(S')$;
    \item $u_j(ikt''_1)+v_j(S')\geq u_i(ikt''_1)+v_i(S')+u_k(ikt''_1)+v_k(S')$;
    \item $u_{t''_1}(ij)+v_{t''_1}(S')\geq u_j(ij)+v_j(S')+u_i(ij)+v_i(S')$;
    \item $u_k(ijt''_1)+v_k(S')\geq u_{t''_1}(ijt''_1)+v_{t''_1}(S')+u_j(ijt''_1)+v_j(S')$.
\end{enumerate}

We sum the inequalities and use that
$v_i(S')+v_j(S')+v_{t''_1}(S')+v_k(S')\leq 2v_i(S')+3v_j(S')+v_{t''_1}(S')+2v_k(S')$ to get:
\begin{eqnarray*}
&u_i(jkt''_1)+u_j(ikt''_1)+u_{t''_1}(ij)+u_k(ijt''_1) \geq&\\
&u_j(jkt''_1)+u_k(jkt''_1)+u_i(ikt''_1)+u_k(ikt''_1)+u_j(ij)+u_i(ij)+u_{t''_1}(ijt''_1)+u_j(ijt''_1).&
\end{eqnarray*}
By subadditivity we know that:
\begin{itemize}
    \item $u_i(j)+u_i(k{t''_1})\geq u_i(jk{t''_1})$;
    \item $u_j(i)+u_j(k)+u_j({t''_1})\geq u_j(ik{t''_1})$;
    \item $u_k(i{t''_1})+u_k(j)\geq u_k(ij{t''_1})$.
\end{itemize}
Thus, we get:
$$
\underbrace{u_i(j)}_\text{1}+\underbrace{u_i(k{t''_1})}_\text{2}+\underbrace{u_{t''_1}(ij)}_\text{3}+\underbrace{u_j(i)}_\text{4}+\underbrace{u_j(k)}_\text{5}+\underbrace{u_j(t''_1)}_\text{6}+\underbrace{u_k(i{t''_1})}_\text{7}+\underbrace{u_k(j)}_\text{8}\geq$$$$ \underbrace{u_j(ij{t''_1})}_\text{6}+\underbrace{u_{t''_1}(ijt''_1)}_\text{3}+\underbrace{u_j(jkt''_1)}_\text{5}+\underbrace{u_k(jkt''_1)}_\text{8}+\underbrace{u_i(ikt''_1)}_\text{2}+\underbrace{u_k(ikt''_1)}_\text{7}+\underbrace{u_i(ij)}_\text{1}+\underbrace{u_j(ij)}_\text{4}.
$$
As above, for each pair of summands with the same number, the left-hand side is at most the right-hand side by monotonicity. This completes the analysis of Base Case 4.

\subsubsection*{Induction step.} 

Assuming the lemma holds for $\ell_1-1,\ell_2,\ell_3$, we prove it for $\ell_1, \ell_2, \ell_3$. The proof is symmetric for increasing $\ell_2$ or $\ell_3$.

Define $S=S'\cup\{t_{\ell_1}\}$ and $u_{i'}^*(x)=v_{i'}(x \mid S)$ for every ${i'}\in[n]$. Using this notation we can rewrite the inequalities of the lemma as follows:
\begin{itemize}
    \item $u_i^*(jkt_1t_2...t_{\ell_1-1}t_1't_2'...t_{\ell_2}'t''_1,...,t''_{\ell_3}) + v_i(S) \geq\\ u_j^*(jkt_1t_2...t_{\ell_1-1}t_1't_2'...t_{\ell_2}'t''_1,...,t''_{\ell_3}) + v_j(S) +\\ u_{t_1}^*(jkt_1t_2...t_{\ell_1-1}t_1't_2'...t_{\ell_2}'t''_1,...,t''_{\ell_3}) + v_{t_1}(S)$;
    
    \item $\forall h\in[\ell_1] : u_{t_h}^*(jkt_{h+1}...t_{\ell_1-1}t_1't_2'...t_{\ell_2}'t''_1,...,t''_{\ell_3}) + v_{t_h}(S) \geq\\ u_j^*(jkt_{h+1}...t_{\ell_1-1}t_1't_2'...t_{\ell_2}'t''_1,...,t''_{\ell_3}) + v_j(S) +\\ u_{t_{h+1}}^*(jkt_{h+1}...t_{\ell_1-1}t_1't_2'...t_{\ell_2}'t''_1,...,t''_{\ell_3}) + v_{t_{h+1}}(S)$;
    
    \item $u_j^*(ikt_1t_2...t_{\ell_1-1}t_1't_2'...t_{\ell_2}'t''_1,...,t''_{\ell_3}) + v_j(S) \geq\\ u_i^*(ikt_1t_2...t_{\ell_1-1}t_1't_2'...t_{\ell_2}'t''_1,...,t''_{\ell_3}) + v_i(S)+\\ u_{t_1'}^*(jkt_1t_2...t_{\ell_1-1}t_1't_2'...t_{\ell_2}'t''_1,...,t''_{\ell_3})+v_{t_1'}(S)$;
    
    \item $\forall h\in[\ell_2] : u_{t_h'}^*(ikt_1...t_{\ell_1-1}t_{h+1}'...t_{\ell_2}'t''_1,...,t''_{\ell_3}) + v_{t_h'}(S) \geq\\ 
    u_i^*(ikt_1...t_{\ell_1-1}t_{h+1}'...t_{\ell_2}'t''_1,...,t''_{\ell_3}) + v_i(S) +\\ u_{t_{h+1}'}^*(jkt_1...t_{\ell_1-1}t_{h+1}'...t_{\ell_2}'t''_1,...,t''_{\ell_3}) + v_{t_{h+1}'}(S)$;
    
    \item $u_k^*(ijt_1t_2...t_{\ell_1-1}t_1't_2'...t_{\ell_2}'t''_1,...,t''_{\ell_3}) + v_k(S) \geq\\ 
    u_i^*(ijt_1t_2...t_{\ell_1-1}t_1't_2'...t_{\ell_2}'t''_1,...,t''_{\ell_3}) + v_i(S) +\\ u_j^*(ijt_1t_2...t_{\ell_1-1}t_1't_2'...t_{\ell_2}'t''_1,...,t''_{\ell_3}) + v_j(S)$;
    
    \item $\forall h\in[\ell_3] : u_{t''_h}^*(ijt_1...t_{\ell_1-1}t_1't_2'...t_{\ell_2}'t''_{h+1},...,t''_{\ell_3}) + v_{t''_h}(S)\geq\\ u_j^*(ijt_1...t_{\ell_1-1}t_1't_2'...t_{\ell_2}'t''_{h+1},...,t''_{\ell_3}) + v_j(S) +\\ u_{t''_{h+1}}^*(ijt_1...t_{\ell_1-1}t_1't_2'...t_{\ell_2}'t''_{h+1},...,t''_{\ell_3})+v_{t_{h+1}}(S)$.
\end{itemize}
The last two inequalities in the second bullet are:
\begin{eqnarray}
    &u_{t_{\ell_1-1}}^*(jkt_1't_2'...t_{\ell_2}') + v_{t_{\ell_1-1}}(S) \geq&\nonumber\\ &u_j^*(jkt_1't_2'...t_{\ell_2}') + v_j(S) + u_{t_{\ell_1}}^*(jkt_1't_2'...t_{\ell_2}') + v_{t_{\ell_1}}(S);&\label{eq:before-last}\\
    &u_{t_{\ell_1}}^*(jkt_1't_2'...t_{\ell_2}') + v_{t_{\ell_1}}(S)\geq&\nonumber\\ &u_j^*(jkt_1't_2'...t_{\ell_2}') + v_j(S) + u_{k}^*(jkt_1't_2'...t_{\ell_2}') + v_{k}(S).&\label{eq:last}
\end{eqnarray}
Combining Inequalities~\eqref{eq:before-last}-\eqref{eq:last} we get that:
\begin{eqnarray}
&u_{t_{\ell_1-1}}^*(jkt_1't_2'...t_{\ell_2}') +v_{t_{\ell_1-1}}(S)\geq&\nonumber\\
&u_j^*(jkt_1't_2'...t_{\ell_2}')+v_j(S)+u_j^*(jkt_1't_2'...t_{\ell_2}')+v_j(S)+u_{k}^*(jkt_1't_2'...t_{\ell_2}')+v_{k}(S)\geq&\nonumber\\
&u_j^*(jkt_1't_2'...t_{\ell_2}')+v_j(S)+u_{k}^*(jkt_1't_2'...t_{\ell_2}')+v_{k}(S).&\label{eq:unified}
\end{eqnarray}
%
Consider replacing the last two inequalities in the second bullet with the combined Inequality~\eqref{eq:unified}. We know from the induction assumption for $\ell_1-1, \ell_2, \ell_3$ that these $\ell_1+\ell_2+\ell_3+2$ inequalities must hold with equality. Thus, the same holds for the $\ell_1+\ell_2+\ell_3+3$ original inequalities.
\qed
\end{proof}

\subsection{ Proof of Lemma~\ref{lem:2-approx}: Supplementary}

\begin{proof}[Lemma~\ref{lem:2-approx}, missing details]
Fix a subset $S\subseteq[n]$ such that there are $<2$ red bidders at the beginning of iteration $S$. We show by case analysis that one of the two priorities holds. 

Denote the highest bidder (whose value at $S$ equals $\OPT(S)$) by $k$. If at the beginning of iteration $S$ bidder $k$ is not colored black, then $k$ has {\bf Priority~2} (note there is also possibly an additional pair with {\bf Priority~1}), and this is sufficient to complete the proof. 
%
%
The main technical challenge is if at the beginning of iteration $S$, bidder $k$ \emph{is} colored black. Observe this can only happen due to back propagation from $(k,S\cup \{k\})$. 

Let $S^* = S\cup \{k\}$.
Since $S\subset S^*$ we know that at the beginning of iteration~$S$, Algorithm~\ref{alg:main} did not yet reach iteration $S^*$. 
Thus, for back propagation to occur, two reds must have propagated forward to $S^*$;
denote the corresponding red bidders by $i,j$. The forward propagation must have been from $(i,S^*\setminus\{i\})$ and $(j,S^*\setminus\{j\})$, resulting in coloring $(k,S^*)$ black (and back-propagating the color black to $(k,S)$). 
We now split the analysis into two cases, by the number of cells colored red at the beginning of iteration $S$ (which is $<2$ by assumption). We use $i,j,k$ as defined above in the case analysis: 
%
\paragraph{Case 1: No red cells.}
As detailed in Section~\ref{sub:2-approx},
Observations~\ref{obs:no_reds}-\ref{obs:one-red} imply that for $\ell_1,\ell_2\geq 0$ there exist $t_1,t_2,...,t_{l_1},t_1',t_2',...,t_{l_2}'$ for which the following inequalities simultaneously hold:
\begin{itemize}
    \item $v_i(S^*\setminus\{i\}) \geq  v_{t_1}(S^*\setminus\{i\})+v_j(S^*\setminus\{i\})$; 
    
    \item $\forall h\in[\ell_1]: v_{t_h}(S^*\setminus\{i,t_1,...,t_h\}) \geq  v_{t_{h+1}}(S^*\setminus\{i,t_1,...,t_h\}) + v_j(S^*\setminus\{i,t_1,...,t_h\})$; 
    
    \item $v_j(S^*\setminus\{j\}) \geq  v_{t_1'}(S^*\setminus\{j\})+v_i(S^*\setminus\{j\})$;
    
    \item $\forall h\in[\ell_2]: v_{t_h'}(S^*\setminus\{j,t_1',...,t_h'\}) \geq  v_{t_{h+1}'}(S^*\setminus\{j,t_1',...,t_h'\})+v_i(S^*\setminus\{j,t_1',...,t_h'\})$;
    
\end{itemize}
By Lemma~\ref{lemma2} with $\ell_3=0$, 
if $v_k(S^*\setminus\{k\}) \geq  v_{i}(S^*\setminus\{k\})+v_j(S^*\setminus\{k\})$
then the inequalities all hold with equality. In particular, $v_k(S)=v_k(S^*\setminus\{k\}) =  v_{i}(S^*\setminus\{k\})+v_j(S^*\setminus\{k\})=v_i(S)+v_j(S)$.
In other words, by Lemma~\ref{lemma2}
either $v_k(S)=v_i(S)+v_j(S)$, or else $v_k(S)< v_i(S)+v_j(S)$ to begin with.
We conclude that
$v_i(S)+v_j(S)\ge v_k(S)$.
Therefore, both $i$ and $j$ are contained in the set $S$, their aggregate value is at least that of $k$, and there are no bidders already colored red at $S$. Thus the pair $i,j$ has {\bf Priority 1}. 

\paragraph{Case 2: Single red cell.} Denote the bidder colored red at $S$ by $t''_1$. 
As explained in Section~\ref{sub:2-approx},
Observations~\ref{obs:no_reds}-\ref{obs:one-red} imply that for $\ell_3\geq 0$ there exist 
$t''_2,...,t''_{\ell_3+1}$ where $t''_{\ell_3+1}=i$ such that:
%
$$\forall h\in [\ell_3]: v_{t''_h}(S\setminus\{t''_1,..,t''_h\})\geq v_{t''_{h+1}}(S\setminus\{t''_1,..,t''_h\})+v_j(S\setminus\{t''_1,..,t''_h\}).$$
These inequalities hold simultaneously with the inequalities for $\ell_1,\ell_2$ above.
By Lemma~\ref{lemma2}, 
if $v_k(S^*\setminus\{k\}) \geq  v_{t''_1}(S^*\setminus\{k\})+v_j(S^*\setminus\{k\})$
then the inequalities all hold with equality. As above we conclude that 
$v_{t''_1}(S)+v_j(S)\ge v_k(S)$, and that the pair $t''_1,j$ has {\bf Priority 1}.
%
\qed
\end{proof}

\section{Matroid Auction Settings}
\label{matroid}

In this appendix we establish, by reduction to the single item case, the existence of a truthful mechanism for matroid settings that achieves a $2$-approximation to the optimal welfare. This existence result follows directly from Proposition~\ref{pro:matroid} below.

A \emph{matroid auction setting} is defined by a matroid $([n],\mathcal{I})$, where $\mathcal{I}$ contains all \emph{independent sets} of bidders, i.e., subsets of bidders who can simultaneously \emph{win} in the auction. 
We also refer to sets in $\mathcal{I}$ as \emph{feasible}. For example, if there are $k$ units of the item to allocate in the auction, $\mathcal{I}$ can be all possible subsets of $k$ bidders. The rest of the setting is as before, i.e., every bidder $i$ has a signal $s_i$ and a valuation $v_i(s_1,\dots,s_n)$ for winning. It is well known that in matroid settings \emph{with given values}, welfare maximization is achieved by greedily adding bidders to the winner set $W$ while keeping $W$ feasible. We refer to this algorithm as Greedy.

\begin{algorithm}         
\begin{algorithmic} [1]
\caption{Matroid Settings}
\Function{MatroidMechanism}{$(E,\mathcal{I}),k,S=\{s_i\mid s_i=1\},V=(v_1,\dots,v_n)$}
\State{$W=\emptyset$}\Comment{Current winning bidders}
\For{$j\in [k]$}\Comment{Run $k$ times where $k$ is the matroid rank}
\State{$x=$ \Call{Allocate}{$E$,$V$}}
\State{$b=$ randomly choose a bidder such that every bidder $b'\in E$ is chosen with probability $x(b',S)$}
\State{$W=W\cup\{b\}$}
\Comment{Add $b$ to the winning set}
\State{$E=$ all bidders $b'$ such that $W\cup\{b'\}\in \mathcal{I}$}
\Comment{Remove bidders who can no longer feasibly win}
\EndFor
\EndFunction
\label{alg:matroid}
\end{algorithmic}
\end{algorithm}


\begin{proposition}
\label{pro:matroid}
If for single item auctions there exists a monotone feasible allocation rule that achieves a 2-approximation to the optimal welfare, then there exists such an allocation rule for matroid auction settings as well.
\end{proposition}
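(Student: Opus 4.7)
The plan is to analyze Algorithm~\ref{alg:matroid}, which iteratively invokes the given single-item mechanism on the remaining feasible bidders and then randomly picks a winner according to its output, updating the independent set $W$ and the pool $E$ of bidders that can still extend $W$. I would prove this produces a feasible, monotone allocation rule achieving a $2$-approximation.

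\emph{Feasibility.} Throughout the execution, $W$ remains an independent set: the pool $E$ at each iteration consists only of bidders $b$ with $W\cup\{b\}\in\mathcal{I}$, so adding the sampled bidder preserves independence. Hence the final $W$ lies in $\mathcal{I}$.

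\emph{Approximation.} Here I would combine matroid exchange with the hypothesized single-item guarantee. Fix a welfare-optimal base $\{o_1,\dots,o_k\}$ sorted so that $v_{o_1}(S)\ge\cdots\ge v_{o_k}(S)$, and consider iteration $j$ with winning set $W_{j-1}$ of size $j-1$. By the matroid augmentation property applied to the independent sets $W_{j-1}$ and $\{o_1,\dots,o_j\}$, there exists $o_\ell\in\{o_1,\dots,o_j\}\setminus W_{j-1}$ with $W_{j-1}\cup\{o_\ell\}\in\mathcal{I}$, so $o_\ell\in E_j$ and $\max_{b\in E_j}v_b(S)\ge v_{o_\ell}(S)\ge v_{o_j}(S)$. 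Submodularity is preserved under restriction of the ground set, so the single-item rule $x$ applied to the sub-instance on $E_j$ satisfies $\sum_{b\in E_j}x(b,S)\,v_b(S)\ge\tfrac{1}{2}\max_{b\in E_j}v_b(S)\ge\tfrac{1}{2}v_{o_j}(S)$. This lower bound on the expected contribution of iteration $j$ holds deterministically for every realization of $W_{j-1}$; summing over $j$ and taking expectations yields $E[\ALG(S)]\ge\tfrac{1}{2}\OPT(S)$.

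\emph{Monotonicity.} This is the main obstacle. Single-item monotonicity only guarantees that bidder $i$'s \emph{conditional} selection probability at iteration $j$ is non-decreasing in $s_i$ when $i\in E_j$, and the event $\{i\in E_j\}$ can itself depend on $s_i$ through the weakly increasing valuations of \emph{other} bidders in prior iterations. My plan is to couple the executions at $(s_{-i},s_i)$ and $(s_{-i},s_i+\delta)$, sharing the same random coins, and to argue that in every coupled realization the set of iterations in which $i$ is selected weakly grows with $s_i$. The two angles I would exploit are: (i)~$v_i$ is \emph{strictly} increasing in $s_i$ while other valuations are only weakly so, making $i$ relatively more attractive to the single-item rule at the raised signal; and (ii)~the single-item rule assigns only probabilities in $\{0,\tfrac{1}{2}\}$, producing a rigid structure that should simplify tracking which bidder is sampled under the coupling. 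Making this coupling rigorous, given the forward/backward propagation in the single-item construction, is where I expect the bulk of the technical work to lie.
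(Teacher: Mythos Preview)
Your feasibility and approximation arguments are correct and essentially coincide with the paper's: the paper defines $\GRD_j$ (the $j$th greedy winner), $\GRD_j^*$ (the best bidder that can still be feasibly added after the algorithm's first $j-1$ choices), and $\ALG_j$ (expected contribution of the algorithm's $j$th choice), and shows $\ALG_j\ge\tfrac12\GRD_j^*\ge\tfrac12\GRD_j$ via the single-item guarantee and the matroid exchange property. Your version, conditioning on each realization of $W_{j-1}$ and then taking expectations, is exactly this argument stated a bit more carefully.

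Where you diverge from the paper is monotonicity. The paper does \emph{not} carry out a coupling. It simply writes
\[
\mathbb{E}[W_i\mid s_i{=}1]=\sum_{t=1}^{k}\mathbb{E}[W_i^t\mid s_i{=}1]\ \ge\ \sum_{t=1}^{k}\mathbb{E}[W_i^t\mid s_i{=}0]=\mathbb{E}[W_i\mid s_i{=}0],
\]
and justifies the middle inequality by ``\textsc{Allocate} at each iteration is monotone''. In other words, the paper asserts per-iteration monotonicity and sums, without addressing the point you correctly flag---that the pool $E$ at iteration $t$, and hence whether $i$ is even a candidate, depends on earlier random choices which themselves depend on $s_i$. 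So your concern is legitimate; the paper's proof is in fact thinner than your plan on exactly this point, it just does not call attention to the gap. Your coupling sketch (shared randomness, exploiting the $\{0,\tfrac12\}$ structure and the strict dependence of $v_i$ on $s_i$) is a reasonable way to try to close it, but it is not what the paper does, and as you acknowledge it is not yet a proof.
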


\begin{proof}
Consider the matroid auction setting and fix a signal profile $S$. Let $k$ be the rank of the matroid. 
Algorithm~\ref{alg:matroid} gives our reduction.
For the analysis we use the following notation:
\begin{itemize}
    \item $\GRD_i$: The welfare contribution (i.e., the valuation given $S$) of the $i$th bidder chosen by Greedy. Using this notation, $\OPT=\sum_{i=1}^k \GRD_i$. 
    \item $\ALG_i$: The expected welfare contribution of the $i$th bidder chosen by Algorithm~\ref{alg:matroid} (where the expectation is over the randomness of \textsc{Allocate}). Using this notation and linearity of expectation, $\ALG=\sum_{i=1}^k \ALG_i$. 
    \item $\GRD_i^*$: After choosing $i-1$ winners by Algorithm~\ref{alg:matroid}, $\GRD_i^*$ is the welfare contribution of the $i$th bidder chosen by Greedy.
\end{itemize}
To show that $\ALG\geq \frac{1}{2}\OPT$, it is sufficient to show that 
for every $i\in[k]$, $\ALG_i\geq \frac{1}{2}\GRD_i^*\geq \frac{1}{2}\GRD_i$:
\begin{enumerate}
    \item $\ALG_i\geq \frac{1}{2}\GRD_i^*$: Let $v^*$ be the highest value among the remaining bidders after $i-1$ iterations of Algorithm~\ref{alg:matroid}. Observe that $\GRD_i^*= v^*$. Since \textsc{Alloc} guarantees a 2-approximation, $\ALG_i\geq \frac{1}{2}v^*=\frac{1}{2}\GRD_i^*$. 
    \item $\GRD_i^*\ge \GRD_i$: Let $W_{i-1}$ be the winning set of bidders after $i-1$ iterations of Algorithm~\ref{alg:matroid}, and let $W^*_{i}$ be the winning set of bidders after $i$ iterations of Greedy. Observe that $\GRD_i$ is the lowest among the values of bidders in $W^*_i$. 
    By the exchange property of matroids, there is a bidder in $W^*_i$ that can be feasibly added to $W_{i-1}$, and $\GRD_i^*$ will be the highest value of such a bidder. The inequality follows. 
\end{enumerate}

{It remains to show that the algorithm maintains monotonicity. Denote: \begin{itemize}
    \item $W_i$: Indicator -- bidder $i$ wins at one of the iterations.
    \item $W_{i}^{t}$: Indicator -- bidder $i$ wins at iteration $t$.
\end{itemize}
We need to show that $\mathop{\mathbb{E}}{[W_{i} \mid s_i=1]}\geq \mathop{\mathbb{E}}{[W_i \mid s_i=0]}$.
Note that:
\begin{eqnarray}
\mathop{\mathbb{E}}{[W_{i} \mid s_i=1]}=&\mathop{\mathbb{E}}[\sum_{t=1}^{k}{W_i^t}\mid s_i=1]& \label{eq:expec_sum}\\ 
=&\sum_{t=1}^{k}{\mathop{\mathbb{E}}{[W_i^t \mid s_i=1}]}& \label{eq:linearity_of_expec_one} \\ \geq
&\sum_{t=1}^{k}{\mathop{\mathbb{E}}{[W_i^t \mid s_i=0}]}& \label{eq:matroid_to_single_monotone}\\=
&\mathop{\mathbb{E}}{[\sum_{t=1}^{k}{{W_i^t} \mid s_i=0]}]}& \label{eq:linearity_of_expec_zero}\\
=&\mathop{\mathbb{E}}{[W_{i} \mid s_i=0]},& \label{eq:expec_sum_zero}
\end{eqnarray}
where Equations~\eqref{eq:expec_sum} and~\eqref{eq:expec_sum_zero} are due to the fact that the bidder can only win at one of the $k$ iterations, equations~\eqref{eq:linearity_of_expec_one} and~\eqref{eq:linearity_of_expec_zero} are due to the linearity of expectation and inequality~\eqref{eq:matroid_to_single_monotone} is due to the fact that \textsc{Allocate} at each iteration is monotone; i.e. by Theorem~\ref{thm:main}}.
This completes the proof. \qed
\end{proof}


\section{Beyond Binary Signals}
\label{over-binary}

An interesting open question is whether our result extends beyond binary signals. This appendix gives two indications that the answer may be positive.

\subsection{Simulations}
We ran a computer simulation that constructs random values, which satisfy SOS for a setting with 3 bidders and 3 signals each. The values are randomized under the SOS and monotonicity (of values) constraints. We achieve that by constructing the values iteratively by the natural signals order to hold under the constraints. To illustrate this, consider the valuation $v_1(s=(1,1,0))$. 
\begin{enumerate}
    \item We start by constructing the valuation $v_1(0,0,0)$ by simply choosing a random number;
    \item We construct the valuations $v_1(1,0,0)$ and $v_1(0,1,0)$ to random numbers in the range $[v_1(0,0,0), \infty]$ to satisfy the constraint of monotonicity of the valuations; i.e. $v_1(0,1,0)\geq v_1(0,0,0)$;
    \item We finally construct the valuation $v_1(1,1,0)$ to a random number in the range $[\max\{v_1(1,0,0), v_1(0,1,0)\}, v_1(1,0,0) + v_1(0,1,0) - v_1(0,0,0)]$;
\end{enumerate}
The upper bound of the range in (3) is used to satisfy the SOS constraint. By SOS we know that: $$v_1(1,1,0) - v_1(1,0,0)\leq v_1(0,1,0) - v_1(0,0,0)$$
Reordering the inequality gives the upper bound: $$v_1(1,1,0) \leq v_1(1,0,0) + v_1(0,1,0) - v_1(0,0,0)$$
We continue to construct all the values iteratively in the same way. 

We ran an extended algorithm for over $2,000,000$ different random valuations.
For each run of the algorithm, we checked if the social welfare approximation of 2 was achieved. All the runs successfully passed the approximation test. The code for the simulations is publicly available.%
\footnote{See \url{https://colab.research.google.com/drive/1vkhOt3aMG5DgivaHZt5URNhPJ7q0W4J1}.}

The algorithm that we ran is very similar to the one shown in this paper for binary signals but with one change. The priorities are defined differently. Given that the range of possible signals is $[k]$; we ordered the priorities from $k$ to $1$. For each signals profile $s$, we iterate over the signal range from $k$ to $1$ and for the iteration of signal's value $m$  we check for priorities $1$ and $2$ as follows; {\bf Priority 1} at iteration $m$ consists of two bidders $i$ and $j$ s.t.:
\begin{enumerate}
    \item Bidder $i$ and $j$'s signals are at least $m$; 
    \item Bidders $i$ and $j$ can both be colored {red} at iteration $s$;
    \item No other bidder $k\neq i,j$ is colored {red} at the beginning of iteration $s$;
    \item The sum of values $v_i(s)+v_j(s)$ is at least $\OPT(s)$ (recall that $\OPT(s)$ is the highest value of any bidder for the item given signal profile $s$).
\end{enumerate}
{\bf Priority 2} at iteration $m$ consists of one bidder $i$ s.t.:

\begin{enumerate}
    \item Bidder $i$'s signal is at least $m-1$;
    \item Bidder $i$ can be colored {red} at iteration $s$;
    \item The value $v_i(s)$ equals $\OPT(s)$.
\end{enumerate}
In this extended algorithm we get $2k$ priorities, $2$ for each possible value of a signal. We incorporate the sense of high and low signals by ordering the priorities decreasingly by the signal range.
\subsection{Extending Lemma~\ref{lemma2}}

We show that our main technical lemma, Lemma~\ref{lemma2}, extends beyond binary signals.

\begin{lemma}
\label{lemma-int-signals}
Consider a signals profile $s'=(s_1',...,s_n')$ and $3+\ell_1+\ell_2+\ell_3$ bidders\\ $E=\{i,j,k$, $t_1,\dots,t_{\ell_1}$,$t_1',\dots,t_{\ell_2}'$,$t''_1,\dots,t''_{\ell_3}\}$ (not necessarily distinct) with SOS valuations, and $n$ values $d_1,d_2,...,d_n$. Define $s_h=s_h'+d_h$ for each bidder $h\in [n]$ and $s=(s_1,...,s_n)$. If the following $3+\ell_1+\ell_2+\ell_3$ inequalities hold simultaneously then they hold with equality:
\begin{itemize}
    \item $v_i(s_i',s_{-i})\geq v_j(s_i',s_{-i})+v_{t_1}(s_i',s_{-i})$
    
    \item $\forall 0<h\leq\ell_1$: $v_{t_h}(s_i',s_{t_1}',...,s_{t_h}',s_{-\{i,t_1,...,t_h\}})\geq \\ v_j(s_i',s_{t_1}',...,s_{t_h}',s_{-\{i,t_1,...,t_h\}})+v_{t_{h+1}}(s_i',s_{t_1}',...,s_{t_h}',s_{-\{i,t_1,...,t_h\}})$
    
    \item $v_j(s_j',s_{-j})\geq v_i(s_j',s_{-j})+u_{t_1'}(s_j',s_{-j})$
    
    \item $\forall 0<h\leq\ell_2$: $v_{t_h'}(s_j',s_{t_1'}',...,s_{t_h'}',s_{-\{j,t_1',...,t_h'\}}) \geq \\ v_i(s_j',s_{t_1'}',...,s_{t_h'}',s_{-\{j,t_1',...,t_h'\}})+v_{t_{h+1}'}(s_j',s_{t_1'}',...,s_{t_h'}',s_{-\{j,t_1',...,t_h'\}})$
    
    \item $\forall 0<h\leq\ell_3$: $v_{t''_h}(s_k',s_{t''_1}',...,s_{t''_h}',s_{-\{k,t''_1,...,t''_h\}}) \geq \\ v_j(s_k',s_{t''_1}',...,s_{t''_h}',s_{-\{k,t''_1,...,t''_h\}})+v_{t''_{h+1}}(s_k',s_{t''_1}',...,s_{t''_h}',s_{-\{k,t''_1,...,t''_h\}})$
    
    \item $v_k(s_k',s_{-k})\geq v_{t''_1}(s_k',s_{-k})+v_j(s_k',s_{-k})$
\end{itemize}

\end{lemma}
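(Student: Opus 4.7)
The plan is to reduce Lemma~\ref{lemma-int-signals} to the already-proved Lemma~\ref{lemma2} by restricting each bidder's signal domain to the two-element chain $\{s_h', s_h\}$. The key observation is that every inequality in the statement involves only valuations at profiles where each coordinate $h$ is either ``low'' ($s_h'$) or ``high'' ($s_h$); no intermediate integer values appear. For each bidder $i$, define an auxiliary set function $\tilde{v}_i : 2^{[n]} \to \mathbb{R}$ by
$$
\tilde{v}_i(T) \;=\; v_i(s^T), \qquad \text{where } s^T_h = s_h \text{ if } h \in T \text{ and } s^T_h = s_h' \text{ if } h \notin T.
$$
Under this encoding, $v_i(s_i', s_{-i})$ corresponds to $\tilde{v}_i([n] \setminus \{i\})$, $v_{t_h}(s_i', s_{t_1}', \dots, s_{t_h}', s_{-\{i, t_1, \dots, t_h\}})$ corresponds to $\tilde{v}_{t_h}([n] \setminus \{i, t_1, \dots, t_h\})$, and so on. Taking $E$ to be the set of bidders named in the lemma, $S' = [n] \setminus E$, and hence $S^* = [n]$ in Lemma~\ref{lemma2}, makes its $3 + \ell_1 + \ell_2 + \ell_3$ hypotheses match exactly those of Lemma~\ref{lemma-int-signals}.

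The key verification is that each $\tilde{v}_i$ is a submodular set function whenever $v_i$ is submodular over the integer signal lattice (the standard generalization of SOS to non-binary signals). For any $T_1, T_2 \subseteq [n]$, the profiles $s^{T_1 \cup T_2}$ and $s^{T_1 \cap T_2}$ coincide with the coordinate-wise maximum and minimum of $s^{T_1}$ and $s^{T_2}$, so lattice submodularity of $v_i$ translates directly into
$$
\tilde{v}_i(T_1 \cup T_2) + \tilde{v}_i(T_1 \cap T_2) \;\leq\; \tilde{v}_i(T_1) + \tilde{v}_i(T_2),
$$
which is exactly submodularity of $\tilde{v}_i$ as a set function. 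Monotonicity of $v_i$ in each coordinate likewise transfers to monotonicity of $\tilde{v}_i$. Thus $\{\tilde{v}_i\}_{i \in [n]}$ forms an SOS family in the binary-signal sense.

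Applying Lemma~\ref{lemma2} to this family then concludes the argument immediately: its hypotheses hold by the assumption of Lemma~\ref{lemma-int-signals}, so its conclusion yields that all those inequalities hold with equality, and translating back through the encoding gives equality of the original $v_i$-inequalities.

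The main obstacle is essentially administrative rather than technical: pinning down the bijection between signal profiles in Lemma~\ref{lemma-int-signals} and binary subsets in Lemma~\ref{lemma2}, and aligning the ``low/high'' convention with the ``in/out of $S^*$'' convention so that the chains of indices (including the terminal identifications $t_{\ell_1+1} = k$, $t_{\ell_2+1}' = k$, $t_{\ell_3+1}'' = i$) line up correctly. Once the correspondence is in place, no new combinatorial or algebraic work of the kind performed in the proof of Lemma~\ref{lemma2} -- summing the inequalities, invoking subadditivity of the marginal functions, and applying monotonicity -- is required, as it is all imported wholesale.
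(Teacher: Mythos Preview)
Your proposal is correct and follows essentially the same approach as the paper: both encode the two signal levels $\{s_h',s_h\}$ as a binary set function $\tilde v_i(T)=v_i(s^T)$, observe that SOS over integer signals implies submodularity of this set function, and then invoke Lemma~\ref{lemma2} directly. Your write-up is in fact more explicit than the paper's, which simply asserts that the induced $v(\widetilde S)$ is submodular and rewrites the inequalities in set-function form before applying Lemma~\ref{lemma2}.
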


\begin{proof}
For each profile of signals $\widetilde{s}$ define the corresponding set of signals $\widetilde{S}=\{h\mid  \widetilde{s}_h=s_h=s_h'+d_h\}$.
$v(\widetilde{S})$ is a submodular set function, and thus, defining $u_i(x)=v_i(x\mid S')$, we can rewrite the inequalities as the following:

\begin{itemize}
    \item $u_i(jkt_1t_2...t_{\ell_1}t_1't_2'...t_{\ell_2}'t''_1,...,t''_{\ell_3}) +v_i(S')\geq \\ u_j(jkt_1t_2...t_{\ell_1}t_1't_2'...t_{\ell_2}'t''_1,...,t''_{\ell_3})+v_j(S')+\\u_{t_1}(jkt_1t_2...t_{\ell_1}t_1't_2'...t_{\ell_2}'t''_1,...,t''_{\ell_3})+v_{t_1}(S')$
    
    \item $\forall 0<h\leq\ell_1$: $u_{t_h}(jkt_{h+1}...t_{\ell_1}t_1't_2'...t_{\ell_2}'t''_1,...,t''_{\ell_3}) +v_{t_h}(S')\geq \\ u_j(jkt_{h+1}...t_{\ell_1}t_1't_2'...t_{\ell_2}'t''_1,...,t''_{\ell_3})+v_j(S')+\\u_{t_{h+1}}(jkt_{h+1}...t_{\ell_1}t_1't_2'...t_{\ell_2}'t''_1,...,t''_{\ell_3})+v_{t_{h+1}}(S')$
    
    \item $u_j(ikt_1t_2...t_{\ell_1}t_1't_2'...t_{\ell_2}'t''_1,...,t''_{\ell_3}) +v_j(S')\geq \\ u_i(ikt_1t_2...t_{\ell_1}t_1't_2'...t_{\ell_2}'t''_1,...,t''_{\ell_3})+v_i(S')+\\+u_{t_1'}(ikt_1t_2...t_{\ell_1}t_1't_2'...t_{\ell_2}'t''_1,...,t''_{\ell_3})+v_{t_1'}(S')$
    
    \item $\forall 0<h\leq\ell_2$: $u_{t_h'}(ikt_1...t_{\ell_1}t_{h+1}'...t_{\ell_2}'t''_1,...,t''_{\ell_3}) +v_{t_h'}(S')\geq \\ u_i(ikt_1...t_{\ell_1}t_{h+1}'...t_{\ell_2}'t''_1,...,t''_{\ell_3})+v_i(S')+\\u_{t_{h+1}'}(ikt_1...t_{\ell_1}t_{h+1}'...t_{\ell_2}'t''_1,...,t''_{\ell_3})+v_{t_{h+1}'}(S')$
    
    \item $\forall 0<h\leq\ell_3$: $u_{t''_h}(ijt_1t_2...t_{\ell_1}t_1't_2'...t_{\ell_2}'t''_{h+1},...,t''_{\ell_3}) +v_{t''_h}(S')\geq \\ u_j(ijt_1t_2...t_{\ell_1}t_1't_2'...t_{\ell_2}'t''_{h+1},...,t''_{\ell_3})+v_j(S')+\\u_{t''_{h+1}}(ijt_1t_2...t_{\ell_1}t_1't_2'...t_{\ell_2}'t''_{h+1},...,t''_{\ell_3})+v_{t''_{h+1}}(S')$
    
    \item $u_k(ijt_1t_2...t_{\ell_1}t_1't_2'...t_{\ell_2}'t''_1,...,t''_{\ell_3})+v_k(S')\geq \\ u_{t''_1}(ijt_1t_2...t_{\ell_1}t_1't_2'...t_{\ell_2}'t''_1,...,t''_{\ell_3})+v_{t''_1}(S')+\\+u_j(ijt_1t_2...t_{\ell_1}t_1't_2'...t_{\ell_2}'t''_1,...,t''_{\ell_3})+v_j(S')$
\end{itemize}
By Lemma~\ref{lemma2} we know that if these inequalities hold together, they hold with equality.
\qed
\end{proof}

\end{document}